\newtheorem{theorem}{Theorem}[section]
\newtheorem{lemma}[theorem]{Lemma}
\newtheorem{proposition}[theorem]{Proposition}
\theoremstyle{definition}
\newtheorem{definition}[theorem]{Definition}
\theoremstyle{remark}
\newtheorem{remark}[theorem]{Remark}
\crefname{enumi}{}{}
\title{\bf  Aharonov-Bohm effect and superoscillations}
\author[F. Colombo]{Fabrizio Colombo}
\address{(FC)
Politecnico di Milano\\Dipartimento di Matematica\\Via E. Bonardi 9\\20133
Milano, Italy}
\email{fabrizio.colombo@polimi.it}
\author[E. Pozzi]{Elodie Pozzi}
\address{(EP) Department of Mathematics and Statistics \\
Saint Louis University \\
220 N. Grand Blvd, 63103 St Louis MO, USA}
 \email{elodie.pozzi@slu.edu}
\author[I. Sabadini]{Irene Sabadini}
\address{(IS)
Politecnico di Milano\\Dipartimento di Matematica\\Via E. Bonardi 9\\20133
Milano, Italy
} \email{irene.sabadini@polimi.it}
\author[B. D Wick]{Brett D. Wick}
\address{(BDW)  Department of Mathematics, Washington University -
 St. Louis, One Brookings Drive\\ St. Louis, MO USA 63130-4899}
 \email{wick@math.wustl.edu}
\thanks{FC and IS are supported by MUR grant Dipartimento di Eccellenza 2023-2027.}
\thanks{BDW's research supported in part by National Science Foundation DMS awards \#2349868 and \#2054863 and Australian Research Council -- DP 220100285.}
\thanks{EP and BDW would like to thank Politecnio di Milano for hospitality during visits in Spring 2024 when this paper was written.}
\begin{document}

\begin{abstract}
The path-integral technique in quantum mechanics provides an intuitive framework for comprehending particle propagation and scattering.
Calculating the propagator for the Aharonov-Bohm potential fits into the range of potentials in multiply-connected spaces, with the propagator represented through a series expansion.
In this paper, we analyze the Schr\"odinger evolution of superoscillations, showing
the supershift properties of the solution to the Schr\"odinger equation for this potential. Our proof is based on the continuity of particular infinite order differential operators acting on spaces of entire functions.
\end{abstract}

\maketitle

\medskip
\noindent AMS Classification: 35A20, 35A08.

\noindent Keywords: Superoscillations, supershift, infinite order differential operators, spaces of entire functions.

\date{today}
\tableofcontents

\section{Introduction}

The phenomenon of superoscillations has attracted attention
across mathematics, physics and engineering, with contributions from various authors in quantum mechanics, optics, signal processing, and wave theory.
Roughly speaking superoscillations are functions that can
oscillate beyond their highest Fourier frequency.
The archetypal superoscillating function resulting from weak values is expressed as:
\begin{equation}
\label{FNEXP}
F_n(x,a)=\sum_{j=0}^n C_j(n,a)e^{i(1-2j/n)x},\ \ x\in \mathbb{R},
\end{equation}
where 
\begin{equation}\label{CICONJN}
C_j(n,a):=\binom{n}{j}\left(\frac{1+a}{2}\right)^{n-j}\left(\frac{1-a}{2}\right)^j
\end{equation}
and $a>1$ and if we fix $x \in \mathbb{R}$  and we let $n$ go to infinity, we  obtain that
$$
\lim_{n \to \infty} F_n(x,a)=e^{iax}.
$$
This property
holds considerable promise for advancing science and technology.
Important contributions can be found in
\cite{QS20}-\cite{OPTICS}.
Over the past 15 years, significant advancements have been made in function theory, driven by insights from quantum mechanics and the Schr\"odinger evolution of such functions.
The discovery of superoscillatory phenomena in quantum mechanics stemmed from weak measurements \cite{aav,abook}, pioneered by Yakir Aharonov and his collaborators. Understanding the evolution of the quantum state is of clear interest and using weak measurements, a type
 of quantum measurement that minimally disturbs a quantum system while extracting information, allows one to do so.  
When considering a general potential, the evolution of superoscillations is encapsulated in
a new concept known as the ``supershift property," which encompasses superoscillations as a specific
case within its framework.
The literature on superoscillations is nowadays quite large, and without claiming completeness we mention the papers
 \cite{AShushi},    \cite{kempf1}-\cite{kempf2HHH} and \cite{lindberg,MAYS,Pozzi}.

The class of superoscillations has been investigated also from the mathematical viewpoint, focusing on its function theory, but a large part of the results are associated with the study
of the evolution of superoscillations
 by quantum field equations with  particular attention to the Schr\"odinger equation.
 In the papers \cite{ABCS1}-\cite{acsst5},
 \cite{AOKI}-\cite {BOREL},   \cite{berry2}-\cite{Talbot}  and \cite{hyper,due,peter,sodakemp}, one can find an up-to-date panorama of this field.
In order to have an overview of the techniques developed in the recent years to study the evolution of superoscillations
and their function theory, we refer the reader to the introductory papers \cite{QS1,QS3,QS2} and \cite{kempfQS}.
Finally, we mention the {\em Roadmap on superoscillations}, see \cite{Be19}, where some of the most recent advances in superoscillations and their applications to technology are explained by leading experts in this field.

\medskip
In this paper we investigate the evolution of superoscillating functions in the
Aharonov-Bohm context.
An idealized arrangement commonly used to observe the Aharonov-Bohm effect
typically involves an infinitely long solenoid, a source emitting charged particles, for example electrons,  and a screen positioned behind the solenoid to capture interference patterns.
 The solenoid is designed to be shielded to ensure that electrons
 are prohibited from entering the space it occupies.
 The  Green's function for the  Aharonov-Bohm effect in dimension two is given by the series expansion
\begin{equation}\label{PROPAB}
K(x'',x',t)=\frac{M}{2\pi \hbar t} e^{\frac{iM}{2 \hbar t} ((r')^2+(r'')^2)}
\sum_{n=-\infty}^{\infty}e^{in(\phi''-\phi')} I_{|n-\xi|}\Big( \frac{Mr'r''}{i\hbar t}\Big)
\end{equation}
 where $x=(r,\phi)$, $I_{|n-\xi|}$ is the Bessel function and the series over the integer numbers takes into account the winding paths around the prohibited region.
The path-integral approach to quantum mechanics, due to Feynman, see \cite{Feynman,FeynmanHIBBS},
 provides the most intuitive theoretical framework for describing processes involving the propagation and scattering of particles.
Behind the computation of the exact form of the propagator are
several topological challenges that arise when particles propagate in multiply-connected spaces.
The first to analyze this fact was Schulman \cite{SchulmanBOOK} and, in a more systematic manner, Laidlaw and Morette-de Witt \cite{LaidlawMoretteDeWitt}.
They have provided general guidelines for computing the propagator
in arbitrary multiply-connected spaces, see also \cite{Morandi}.

\medskip
Given the propagator \eqref{PROPAB},  the solution of the Schr\"odinger equation with the Aharonov-Bohm potential is given by
$$
\psi(x'',t)=\int_{\mathbb{R}^2} K(x'',x',t) \psi (x')dx',
$$
which can be expressed in polar coordinates $x=(r,\phi)$ as
$$
\psi(r,\phi,t)=\int_0^{2\pi}\int_0^\infty K(r,\phi; \rho,\theta,t)\ \psi_0 (\rho,\theta)\ \rho d\rho \ d\theta.
$$
The Green's function
$$
K(x'',x',t)=K(r,\phi; \rho,\theta,t)
=\frac{M}{2\pi \hbar t} e^{\frac{iM}{2 \hbar t} (\rho^2+r^2)}
\sum_{n=-\infty}^{\infty}e^{in(\phi-\theta)} I_{|n-\xi|}\Big( \frac{M\ r\ \rho}{i\hbar t}\Big)
$$
can be rewritten using the modified Bessel functions, 
$
I_\alpha(x)=i^{-\alpha} J_{\alpha}(ix)$,
in the equivalent form 
$$
K(x'',x',t)=K(r,\phi; \rho,\theta,t)
=\frac{M}{2\pi \hbar t} e^{\frac{iM}{2 \hbar t} (\rho^2+r^2)}
\sum_{n=-\infty}^{\infty}e^{in(\phi-\theta)}
i^{-|n-\xi|} J_{|n-\xi|}\Big(\frac{M\ r\ \rho}{\hbar t}\Big).
$$
Consider now the initial datum 
$$
\psi_0(r, \phi)=e^{ig(a) x+ih(a)y}=e^{ig(a) r\cos \phi +ih(a)r\sin \phi}
$$
where $g$ and $h$ are entire functions, the solution to the Schr\"odinger equation becomes 
$$
\psi_{a,b}(r,\phi,t)=
\int_0^{2\pi}\int_0^\infty K(r,\phi; \rho,\theta,t)
e^{ig(a) \rho\cos \theta +ih(a)\rho\sin \theta}\ \rho d\rho \ d\theta.
$$
The aim of this paper is to establish the 
supershift property for the Cauchy problem for the Schr\"odinger equation with the Aharonov-Bohm potential with a very general initial superoscillatory datum
\begin{equation}
\label{YN}
Y_n(x,y,a)=\sum_{j=0}^nC_j(n,a)e^{ig(1-2j/n)x}e^{ih(1-2j/n)y},
\end{equation}
where $x,y\in\mathbb{R}$, the coefficients $C_j(n,a)$ in  (\ref{CICONJN}) are  related to the weak measurements, and 
  $g$ and $h$, are given entire functions, monotone increasing in $a$, of the following form:
  \begin{equation*}
  g(\lambda)=\sum_{u=0}^\infty g_u\lambda^u, \ \ \ h(\lambda)=\sum_{v=0}^\infty h_v\lambda^v.
  \end{equation*}
One can prove that
\begin{equation*}
\lim_{n\to\infty}Y_n(x,y,a)=e^{ig(a)x}e^{ih(a)y}.
\end{equation*}
Precisely, we study the supershift property of the solution given by
$$
\Psi_n(r,\phi,t)=\sum_{j=0}^nC_j(n,a) \psi_{g\left(1-\frac{2j}{n}\right), h\left(1-\frac{2j}{n}\right)}(r,\phi,t)
$$
for all $r\in (0,\infty)$ $\phi \in [0,2\pi]$, $t> 0$,
where
$$
\psi_{g(1-2j/n),h(1-2j/n)}(r,\phi,t):=
\int_0^{2\pi}\int_0^\infty K(r,\phi; \rho,\theta,t)
e^{ig(1-2j/n) \rho\cos \theta +ih(1-2j/n)\rho\sin \theta}\ \rho d\rho \ d\theta.
$$

\medskip
The plan of the paper is as follows.  In Section \ref{s:Prelims} we summarize the main concepts of superoscillations and the notion of supershift property of the solution of the Schr\"odinger equation. 
In Section \ref{s:ABProp} we represent the oscillating integral in the Aharonov-Bohm propagator in terms of Fresnel integrals and we give the main estimates in order to guarantee the limit exchanges in the proof of the main results. 
In Section \ref{s:ProofMain} we state and prove our main results
based on the so-called Aharonov-Bohm infinite order differential operator.
In particular the continuity of this operator on the space of 
holomorphic functions with suitable growth conditions is a crucial tool to prove the supershift property of the solution of the Schr\"odinger equation in the Aharonov-Bohm field.

\section{Preliminaries on Superoscillations and Supershifts}
\label{s:Prelims}

We begin by collecting all the necessary background for the paper regarding the precise definition of superoscillatory functions and 
the crucial notion of supershift which encompasses superoscillations as a particular case. 

 \begin{definition}\label{SUPOSONE}
A {\em generalized Fourier sequence} is 
a sequence of the form
\begin{equation}\label{basic_sequenceq}
f_n(x):= \sum_{j=0}^n Z_j(n,a)e^{ih_j(n)x},\ \ \ n\in \mathbb{N},\ \ \ x\in \mathbb{R},
\end{equation}
where $a\in\mathbb R$, $Z_j(n,a)$ and $h_j(n)$
are complex and real valued functions of the variables $n,a$ and $n$, respectively.
The sequence (\ref{basic_sequenceq})
 is said to be {\em a superoscillating sequence} if
 $\sup_{j,n}|h_j(n)|\leq 1$ and
 there exists an open subset of $\mathbb R$,
 which will be called {\em a superoscillation set},
 on which $f_n(x)$ converges locally uniformly to $e^{ig(a)x}$,
 where $g$ is a continuous real valued function in an open subset of $\mathbb R$ such that $|g(a)|>1$.
\end{definition}

\begin{definition}[Supershift]\label{Super-shift}
Let $\mathcal{I}\subseteq\mathbb{R}$ be an interval with $[-1,1]\subset \mathcal I$
and let
$\varphi:\, \mathcal I  \times \mathbb{R}\to \mathbb R$ be a continuous function on $\mathcal I$.
Set
$$
\varphi_{h}(x):=\varphi(h,x), \ \ h\in \mathcal{I},\ \ x\in \mathbb R
 $$
and consider a sequence of points $\{h_{j}(n)\}$ such that
 $
  h_{j}(n)\in [-1,1] \ \  {\rm for} \ \  j=0,\ldots,n \ \ {\rm  and} \ \ n\in\mathbb{N}_0$.
   Define the functions
\begin{equation}\label{psisuprform}
\psi_n(x):=\sum_{j=0}^nc_j(n)\varphi_{h_{j}(n)}(x),
\end{equation}
where $\{c_j(n)\}$ is a sequence of complex numbers for $j=0,\ldots,n$ and $n\in\mathbb{N}_0$.
If
$$
\lim_{n\to\infty}\psi_n(x)=\varphi_{a}(x)
$$
for some $a\in\mathcal I$ with $|a|>1$, we say that the functions
$\psi_n(x)$, for  $x\in \mathbb{R}$, admits a {\em supershift}.
 \end{definition}

 \begin{remark}
 The term supershift comes from the fact that the interval
  $\mathcal I$ can be arbitrarily large (it could be $\mathbb R$)
  and that the constant $a$ can be arbitrarily far away from the interval $[-1,1]$ where the functions $\varphi_{h_{j(n)}}(\cdot)$ are indexed, see \eqref{psisuprform}.
 \end{remark}

The definition of superoscillations can be extended to several variables.

\begin{definition}[Generalized Fourier sequence in several variables]
Let $d\in \mathbb{N}$ be such that $d\geq 2$, and assume that
$(x_1,\ldots,x_d)\in \mathbb{R}^d$. Let
 $\{h_{j,\ell}(n) \}$,  $j=0,\ldots,n$  for  $n\in \mathbb{N}_0$, be  real-valued sequences for $\ell=1,\ldots,d$.
A {\em generalized Fourier sequence in several variables} is a sequence of functions of the form
\begin{equation}\label{basic_sequence_sev}
F_n(x_1,\ldots ,x_d)=\sum_{j=0}^n c_j(n)  e^{ix_1 h_{j,1}(n)}e^{ix_2 h_{j,2}(n)}\cdots e^{ix_d h_{j,d}(n)},
\end{equation}
where  $\{c_j(n)\}$, for $j=0,\ldots ,n$ and   $ n\in \mathbb{N}_0$, is a complex-valued sequence.
\end{definition}

\begin{definition}[Superoscillating sequence]\label{superoscill}
Let $d\in \mathbb{N}$ be such that $d\geq 2$.  A generalized Fourier sequence  in several variables $F_n(x_1,\ldots ,x_d)$, 
is said to be {\em a superoscillating sequence} if
$$
\sup_{j=0,\ldots ,n,\ n\in\mathbb{N}} \  |h_{j,\ell}(n)|\leq 1 ,\ \ {\rm for} \ \ell=1,\ldots,d,
$$
 and there exists  an open subset of $\mathbb{R}^d$, which will be called {\em a superoscillation set}, on which
$F_n(x_1,\ldots ,x_d)$ converges locally uniformly to $e^{ix_1 g_1(a)}e^{ix_2 g_2(a)}\cdots e^{ix_d g_d(a)}$, where $a$ belongs to an open subset $U$ of $\mathbb R$, the $g_\ell$ are continuous functions of real variables whose domain contains $U$ and $|g_\ell (a)|>1$ for  $\ell=1,\ldots ,d$ and $a\in U$.
\end{definition}

In our discussion, the concept of supershift, 
which extends the idea of superoscillation, 
illustrates how sampling a function within an interval enables the computation 
of its values far beyond that interval. In the paper \cite{REGULAR-SAMP}, the relationship between supershift and real analyticity was explored. Leveraging Serge Bernstein's classical theorem, it was demonstrated that real analyticity in a complex-valued function 
implies a robust form of supershift. 
Conversely, employing a parametric version of a result by Leonid Kantorovitch, 
it was established that the reverse implication does not hold in general.  Another work, see \cite{IRREGULAR-SAMP}, delves into the notion of irregular sampling and 
the correlation between supershift and real analyticity across the entire real line. 
These findings reveal that any entire function, when restricted to $\mathbb{R}$, exhibits a supershift, whereas the converse is generally false.

We next introduce the space of functions that plays an important role in our analysis.
\begin{definition}\label{sect2-def1}

The space $A_1$ is the  complex algebra of entire functions such that there exists $B>0$ with
\begin{equation}\label{1B}
\sup\limits_{w\in \mathbb{C}} \left(|f(w)|\, \exp(-B|w|)\right)  <+\infty.
\end{equation}
\end{definition}
The space $A_1$ has a rather complicated topology, see e.g. \cite{BG_book}, since it is a linear space obtained via an inductive limit. For our purposes, it is enough to consider, for any fixed $B>0$,
 the set $A_{1,B}$ of functions $f$ satisfying \eqref{1B}, and to observe that
$$
\|f\|_{B}:=\sup\limits_{w\in \mathbb{C}} \big(|f(w)|\, \exp(-B|w|)\big)
$$
defines a norm on $A_{1,B}$, called the $B$-norm. One can prove that $A_{1,B}$ is a Banach space with respect to this norm.

Moreover, for $f$ and a sequence $\{f_n\}$ belonging to $A_1$, $f_n$ converges to $f$ in $A_1$ if and only if there exists $B$ such that $f,f_n\in A_{1,B}$ and
\[
\lim\limits_{n\rightarrow\infty}\sup\limits_{w\in \mathbb{C}} \big|f_n(w)-f(w)\big|e^{-B|w|}=0.
\]
To prove our main results we need an important lemma that characterizes the coefficients of entire functions with growth conditions.
\begin{lemma}\label{BETOA1}
The function
$$
f(z)=\sum_{j=0}^\infty f_jz^j
$$
belongs to $A_1$
if and only if there exists $C_f>0$ and $b>0$ such that
$$
|f_j|\leq C_f \frac{b^j}{\Gamma(j+1)}=C_f\frac{b^j}{j!}.
$$
\end{lemma}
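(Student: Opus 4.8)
The plan is to establish the two implications of the equivalence separately; the ``if'' direction is a one-line summation estimate, while the ``only if'' direction rests on Cauchy's estimates for the Taylor coefficients, optimized over the radius of the circle.

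For sufficiency, I would assume $|f_j| \le C_f\, b^j/\Gamma(j+1)$ for all $j$ and observe that, for every $w \in \mathbb{C}$,
\[
|f(w)| \le \sum_{j=0}^\infty |f_j|\,|w|^j \le C_f \sum_{j=0}^\infty \frac{(b|w|)^j}{j!} = C_f\, e^{b|w|}.
\]
In particular the power series converges on all of $\mathbb{C}$, so $f$ is entire, and the displayed bound is exactly \eqref{1B} with $B = b$; hence $f \in A_{1,b} \subseteq A_1$.

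For necessity, I would start from $f \in A_1$, i.e.\ $|f(w)| \le C\, e^{B|w|}$ for suitable $C,B>0$ (after enlarging $C$ we may also assume $C \ge |f(0)|$). The Cauchy integral formula for the coefficients gives, for every $R>0$ and every $j\ge 0$,
\[
|f_j| = \left|\frac{1}{2\pi i}\int_{|w|=R}\frac{f(w)}{w^{j+1}}\,dw\right| \le \frac{C\, e^{BR}}{R^j}.
\]
Since $R$ is at our disposal, I would minimize $R \mapsto e^{BR}R^{-j}$ over $R>0$: for $j\ge 1$ the minimum is attained at $R=j/B$ and yields $|f_j| \le C\,(eB/j)^j$ (and $|f_0|\le C$). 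To reach the stated form it then remains to absorb the power $j^{-j}$ into a factorial via the elementary inequality $j! \le j^j$, which gives
\[
|f_j| \le C\,\frac{(eB)^j}{j^j} \le C\,\frac{(eB)^j}{j!} = C_f\,\frac{b^j}{\Gamma(j+1)}
\]
with $b = eB$ and $C_f = C$; the case $j=0$ is fine because $C_f\,b^0/\Gamma(1) = C \ge |f_0|$.

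I do not anticipate a genuine obstacle: the only nontrivial move is the standard choice of Cauchy radius $R$ proportional to $j$, which converts the exponential growth of $f$ into $\Gamma$-type decay of the $f_j$, combined with the elementary bound $j! \le j^j$. One could instead invoke Stirling's formula and, with slightly more care, obtain $b$ arbitrarily close to $B$, but the crude estimate above is all that is needed in the sequel.
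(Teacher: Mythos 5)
Your proof is correct: the ``if'' direction by direct summation and the ``only if'' direction via Cauchy's estimates with the radius $R=j/B$ together with $j!\le j^j$ is exactly the standard argument. The paper itself does not reprove this lemma --- it simply cites \cite{AOKI} --- and the proof given there is essentially the same as yours, so there is nothing to add.
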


Lemma \ref{BETOA1} has been proved in \cite{AOKI} and is a crucial fact in what follows.  With this notation and these definitions we can make the notion of continuity on the space $A_1$ explicit, see \cite{AOKI}.  A linear operator $\mathcal{U}: A_1\to A_1$ is continuous if and only if for any $B>0$ there exists $B'>0$ and $C>0$ such that
 \begin{equation}\label{contA1}
 \mathcal{U}(A_{1,B})\subset A_{1,B'} \ {\rm and}\qquad
 \| \mathcal{U}(f)\|_{B'} \leq C\| f\|_B, \qquad \forall f\in A_{1,B}.
 \end{equation}

We will also be in need of the following estimate for the Gamma function.
\begin{lemma}\label{ESTM_GAMMA}
Let $q\in [1,\infty)$. Then we have
$$
\Gamma\left(\frac{n}{q}+1\right)\leq (n!)^{1/q}.
$$
\end{lemma}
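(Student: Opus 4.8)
The plan is to prove Lemma \ref{ESTM_GAMMA} by a direct comparison of $\Gamma(n/q+1)$ with $(n!)^{1/q} = \Gamma(n+1)^{1/q}$, reducing everything to an inequality that can be checked pointwise in $n$ using either log-convexity of $\Gamma$ or an elementary factorial argument. First I would note that for $q=1$ the statement is an equality, so we may assume $q>1$. Write $n = mq + s$ is not clean since $n/q$ need not be an integer, so instead I would work with $\Gamma(n/q+1)$ directly. The key observation is that $\log\Gamma$ is convex on $(0,\infty)$ (the Bohr--Mollerup property), and more usefully, that $t\mapsto \log\Gamma(t+1)$ is superadditive-like: from log-convexity and $\Gamma(1)=1$ one gets $\Gamma(\lambda x + (1-\lambda))\le \Gamma(x)^{\lambda}$ for $\lambda\in[0,1]$ and $x\ge 1$. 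Applying this with $x = n+1$ — so that $x\ge 1$ — and $\lambda = 1/q \in (0,1]$ yields
$$
\Gamma\!\left(\frac{n+1}{q} + \frac{q-1}{q}\right) = \Gamma\!\left(\frac{1}{q}(n+1) + \left(1-\frac{1}{q}\right)\cdot 1\right) \le \Gamma(n+1)^{1/q} = (n!)^{1/q}.
$$
The argument of $\Gamma$ on the left is $(n+1)/q + (q-1)/q = n/q + 1$, which is exactly the left-hand side of the claimed inequality. So the lemma follows in one line from log-convexity of $\Gamma$.

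The step I expect to be the main (minor) obstacle is justifying the inequality $\Gamma(\lambda x + (1-\lambda)\cdot 1) \le \Gamma(x)^{\lambda}\Gamma(1)^{1-\lambda} = \Gamma(x)^{\lambda}$: this is precisely the statement that $\log\Gamma$ is convex, applied to the two points $x$ and $1$ with weights $\lambda$ and $1-\lambda$. Log-convexity of $\Gamma$ is classical (e.g. from the integral representation and Hölder's inequality, or from Bohr--Mollerup), so I would simply cite it. One should double-check the edge cases: when $q=1$ we have $\lambda=1$ and both sides equal $\Gamma(n+1)$; when $n=0$ the left side is $\Gamma(1)=1$ and the right side is $(0!)^{1/q}=1$, so equality holds there too.

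An alternative, fully elementary route avoiding any appeal to convexity: for integer $n$, iterate the functional equation to compare $\Gamma(n/q+1)$ term-by-term against the product defining $n!$. Concretely, one can show $\prod_{k=1}^{n}(k/q) \le \Gamma(n/q+1)$-type bounds are not quite what is needed; rather, the cleanest elementary argument still routes through the multiplicativity of $\log\Gamma$ under the convexity inequality above, so I would present the log-convexity proof as the main one and omit the elementary variant. The whole proof should occupy only a few lines.
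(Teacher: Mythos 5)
Your proof is correct and is essentially the same as the paper's: the log-convexity inequality $\Gamma(\tfrac{1}{q}(n+1)+\tfrac{1}{p}\cdot 1)\le \Gamma(n+1)^{1/q}\Gamma(1)^{1/p}$ that you invoke is exactly what the paper establishes by writing $e^{-t}t^{n/q}=(e^{-t}t^{n})^{1/q}(e^{-t})^{1/p}$ and applying H\"older's inequality to the integral representation of $\Gamma(\tfrac{n}{q}+1)$. The only difference is that you cite log-convexity as a classical fact while the paper unwinds it into the explicit one-line H\"older computation.
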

\begin{proof}
This is a direct consequence of H\"older's inequality. Let $p$ and $q$ satisfy $\frac{1}{p}+\frac{1}{q}=1$, we observe that
\[
\begin{split}
\Gamma\left(\frac{n}{q}+1\right)&=\int_0^\infty e^{-t}t^{\frac{n}{q}}\, dt
=\int_0^\infty e^{-t\left(\frac{1}{p}+\frac{1}{q}\right)}t^{\frac{n}{q}}\, dt
\\
&
\leq
\left(\int_0^\infty e^{-t}t^{n}\ \, dt\right)^{\frac{1}{q}}
\left(\int_0^\infty  e^{-t}\, dt\right)^{1/p}
=\left(\int_0^\infty e^{-t}t^{n}\ \, dt\right)^{\frac{1}{q}}
=(n!)^{\frac{1}{q}}.
\end{split}
\]
\end{proof}

\section{Fresnel Integral Representation of the Aharonov-Bohm Propagator}
\label{s:ABProp}

We now want to represent via a power series the solution
 of the Schr\"odinger equation with the exponential function as initial datum, which is a  crucial step in order to study superoscillations.
The Green's function for the  Aharonov-Bohm effect in two dimensions is
$$
K(x'',x',t)=\frac{M}{2\pi \hbar t} e^{\frac{iM}{2 \hbar t} ((r')^2+(r'')^2)}
\sum_{n=-\infty}^{\infty}e^{in(\phi''-\phi'))} I_{|n-\xi|}\Big( \frac{Mr'r''}{i\hbar t}\Big)
$$
 where $x=(r,\phi)$ and $\xi$ is a real number that is proportional to the magnetic flux of the infinite solenoid centered at the origin of the reference system.  The solution to the Schr\"odinger equation with initial data $\psi$ is given by 
$$
\psi(x'',t)=\int_{\mathbb{R}^2} K(x'',x',t) \psi (x')dx'
$$
which can be written in polar coordinates as
$$
\psi(r,\phi,t)=\int_0^{2\pi}\int_0^\infty K(r,\phi; \rho,\theta,t)\ \psi_0 (\rho,\theta)\ \rho d\rho \ d\theta.
$$
Here the kernel $K(x'',x',t)$ has the following equivalent representations:
\begin{align*}
K(x'',x',t) &=K(r,\phi; \rho,\theta,t)\\
& =\frac{M}{2\pi \hbar t} e^{\frac{iM}{2 \hbar t} (\rho^2+r^2)}
\sum_{n=-\infty}^{+\infty}e^{in(\phi-\theta)} I_{|n-\xi|}\Big( \frac{M\ r\ \rho}{i\hbar t}\Big)\\
& = \frac{M}{2\pi \hbar t} e^{\frac{iM}{2 \hbar t} (\rho^2+r^2)}\sum_{n=-\infty}^{\infty}e^{in(\phi-\theta)} i^{-\vert n-\xi\vert}J_{|n-\xi|}\Big( \frac{M\ r\ \rho}{\hbar t}\Big)\\
& =\frac{M}{2\pi \hbar t} e^{\frac{iM}{2 \hbar t} (\rho^2+r^2)} F_{\xi}(r,\phi,\theta,t,\rho)
\end{align*}
where we have defined:
\begin{equation}
    \label{def:FFunction}
    F_{\xi}(r,\phi,\theta,t,\rho):=\sum_{n=-\infty}^{\infty}e^{in(\phi-\theta)} i^{-\vert n-\xi\vert}J_{|n-\xi|}\Big( \frac{M\ r\ \rho}{\hbar t}\Big)=\sum_{n=-\infty}^{\infty}e^{in(\phi-\theta)} I_{|n-\xi|}\Big( \frac{M\ r\ \rho}{i\hbar t}\Big)
\end{equation}
and used the modified Bessel functions
$I_\alpha(x)=i^{-\alpha} J_{\alpha}(ix)$.

Important for the remaining portions of the argument, we will need to obtain some estimates on the function $F_\xi$ and its growth as a function of $z$.  We turn to this now.
\begin{lemma}[Estimate on the function $F_\xi$]\label{EFFEXI}
Let $F_{\xi}(r,\phi,\theta,t,\rho)$ be the function defined in \eqref{def:FFunction}.
For $\rho$ real and non-negative there exists a positive constant $c$ such that
\begin{equation}
    \label{e:F_Estimates}
    \left\vert F_{\xi}(r,\phi,\theta,t,\rho)\right\vert \leq q(\rho)e^{{c}\rho} \ \ {\rm and}  \ \ \left\vert F_{\xi}(r,\phi,\theta,t,\rho e^{i\pi/4})\right\vert \leq q(\rho)e^{{c}\rho}
\end{equation}
where $q(\rho)$ is a function that has at most polynomial growth in $\rho$ and the coefficients in the function and in $c$ depend upon $r,M,t$ and $\xi$.
\end{lemma}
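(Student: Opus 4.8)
The strategy is to bound $F_\xi$ termwise via the classical Poisson integral estimate for Bessel functions: for $\nu\ge0$ and $z\in\mathbb{C}$,
$$
|J_\nu(z)|\ \le\ \frac{|z/2|^{\nu}}{\Gamma(\nu+1)}\,e^{|\Im z|},
$$
which follows from the representation $J_\nu(z)=\frac{(z/2)^\nu}{\sqrt{\pi}\,\Gamma(\nu+\frac12)}\int_{-1}^{1}(1-s^2)^{\nu-\frac12}e^{isz}\,ds$ together with the Beta integral $\int_{-1}^{1}(1-s^2)^{\nu-\frac12}\,ds=\sqrt{\pi}\,\Gamma(\nu+\frac12)/\Gamma(\nu+1)$ and $|e^{isz}|\le e^{|\Im z|}$ for $|s|\le1$. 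Since $|e^{in(\phi-\theta)}|=|i^{-|n-\xi|}|=1$, and since the Bessel argument appearing in \eqref{def:FFunction} is $z=\frac{Mr\rho}{\hbar t}$ in the first estimate and $z=\frac{Mr\rho}{\hbar t}e^{i\pi/4}$ in the second — both with $|z|=w:=\frac{Mr\rho}{\hbar t}\ge0$ and $|\Im z|\le w/\sqrt2$ — it will be enough to show
$$
\Sigma(w)\ :=\ \sum_{n\in\mathbb{Z}}\frac{(w/2)^{|n-\xi|}}{\Gamma(|n-\xi|+1)}\ \le\ \frac{2+w}{c_1}\,e^{w/2}
$$
for some constant $c_1>0$, because then $|F_\xi|\le e^{w/\sqrt2}\,\Sigma(w)$; this inequality also shows that the series \eqref{def:FFunction} converges absolutely.

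To prove the bound on $\Sigma(w)$ I would split the sum at $n=\lfloor\xi\rfloor$. Setting $\sigma:=\xi-\lfloor\xi\rfloor\in[0,1)$, the exponents $|n-\xi|$ with $n\le\lfloor\xi\rfloor$ are exactly $\{m+\sigma:m\ge0\}$ and those with $n\ge\lfloor\xi\rfloor+1$ are exactly $\{k-\sigma:k\ge1\}$, counted with the multiplicities they have in the original sum, so $\Sigma(w)=S_1(w)+S_2(w)$ with $S_1(w)=\sum_{m\ge0}(w/2)^{m+\sigma}/\Gamma(m+\sigma+1)$ and $S_2(w)=\sum_{k\ge1}(w/2)^{k-\sigma}/\Gamma(k-\sigma+1)$. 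For $S_1$, the identity $\Gamma(m+\sigma+1)=\Gamma(\sigma+1)\prod_{j=1}^{m}(j+\sigma)$ gives $\Gamma(m+\sigma+1)\ge c_1\,m!$ with $c_1:=\min_{x\in[1,2]}\Gamma(x)>0$, while $(w/2)^{m+\sigma}\le(1+w/2)(w/2)^m$ for all $w\ge0$ since $0\le\sigma<1$; hence $S_1(w)\le\frac{1+w/2}{c_1}\sum_{m\ge0}(w/2)^m/m!=\frac{1+w/2}{c_1}e^{w/2}$. The same argument, via $\Gamma(k-\sigma+1)\ge\Gamma(2-\sigma)(k-1)!\ge c_1(k-1)!$ and $(w/2)^{k-\sigma}\le(1+w/2)(w/2)^{k-1}$, yields $S_2(w)\le\frac{1+w/2}{c_1}e^{w/2}$, and adding the two completes the estimate on $\Sigma(w)$.

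Combining the two displays and recalling $w=\frac{Mr\rho}{\hbar t}$ gives
$$
\max\{\,|F_\xi(r,\phi,\theta,t,\rho)|,\ |F_\xi(r,\phi,\theta,t,\rho e^{i\pi/4})|\,\}\ \le\ \frac{1}{c_1}\Big(2+\tfrac{Mr}{\hbar t}\rho\Big)\,e^{\left(\frac12+\frac1{\sqrt2}\right)\frac{Mr}{\hbar t}\rho},
$$
so \eqref{e:F_Estimates} holds with $q(\rho)=\frac{1}{c_1}\big(2+\frac{Mr}{\hbar t}\rho\big)$ (affine in $\rho$, hence of at most polynomial growth) and $c=\big(\frac12+\frac1{\sqrt2}\big)\frac{Mr}{\hbar t}$, the constants depending only on $M,r,t$. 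I expect the only genuinely delicate point to be the bookkeeping in the second paragraph: one must check that, as $n$ runs over $\mathbb{Z}$, the quantities $|n-\xi|$ sweep out the two families $\{m+\sigma\}_{m\ge0}$ and $\{k-\sigma\}_{k\ge1}$ with the correct multiplicities for every $\sigma\in[0,1)$ (the two families overlap precisely when $\sigma\in\{0,\tfrac12\}$, which causes no trouble since the repeated value then genuinely occurs twice in $\Sigma(w)$), and that the fractional powers $(w/2)^\sigma$ and $(w/2)^{1-\sigma}$ are controlled uniformly down to $w=0$; once this is settled, the comparison with the exponential series is immediate.
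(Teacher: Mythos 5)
Your proof is correct, and it reaches the stated bound by a genuinely different route in its key step. The paper works directly with the power series of $J_{|n-\xi|}$, which produces a double sum over $l$ and $n$; it then uses the inequality $\Gamma(\alpha+1)\Gamma(\beta+1)\leq\Gamma(\alpha+\beta+2)$ to factor out the modified Bessel function $I_0\big(\tfrac{Mr\rho}{2\hbar t}\big)$, estimates the remaining $n$-sums, and finally invokes an integral representation to get $I_0(x)\leq Ke^{x}$. You instead invoke the Poisson integral representation to obtain the clean termwise bound $|J_\nu(z)|\leq \frac{|z/2|^{\nu}}{\Gamma(\nu+1)}e^{|\Im z|}$, which collapses the inner sum in one stroke and reduces everything to the single bilateral sum $\Sigma(w)$. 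The second halves of the two arguments essentially coincide: both split the sum over $n$ at the integer part of $\xi$, reduce the shifted Gamma functions to factorials via $\Gamma(k+1\pm\sigma)\gtrsim (k-1)!$ or $k!$, absorb the fractional powers $(w/2)^{\pm\sigma}$ into an affine factor, and compare with the exponential series. Your approach buys two things: explicit, $\xi$-independent constants, and a transparent treatment of the rotated argument $\rho e^{i\pi/4}$ through the factor $e^{|\Im z|}$ with $|\Im z|=w/\sqrt{2}$ (the paper dismisses the second inequality of \eqref{e:F_Estimates} as a ``direct consequence'' of the first, which is true because the series estimate only sees $|z|$, but your version makes this explicit). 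The only caveats are cosmetic: the constant also depends on $\hbar$, and one should note that the Poisson representation is valid for all orders $\nu=|n-\xi|\geq 0$ and for $z$ with $|\arg z|<\pi$, both of which hold here.
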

\begin{proof}
We prove just the first inequality since the second one is a direct consequence of the first one.
Recall that for $\alpha\in\mathbb{R}$ the modified Bessel function of the first kind is given by:
\begin{equation}
    \label{e:BesseL}
    J_\alpha(x)=\sum_{l=0}^{\infty}\frac{(-1)^l}{l!\,\Gamma(l+1+\alpha)}\left(\frac{x}{2}\right)^{2l+\alpha}.
\end{equation}
The function we encounter, and need to estimate, is given by,
\begin{align*}
    F_{\xi}(r,\phi,\theta,t,\rho) &=\sum_{n=-\infty}^{\infty}i^{-|n-\xi|} e^{in(\phi-\theta)}
 J_{|n-\xi|}\Big(\frac{ M r\, \rho}{ 2\hbar t}\Big)\\
 &=\sum_{n=-\infty}^{\infty}i^{-|n-\xi|} e^{in(\phi-\theta)}
\left(\frac{ M r\, \rho}{ 4\hbar t}\right)^{\vert n-\xi\vert} \sum_{l=0}^{\infty}\frac{(-1)^l}{l!\, \Gamma(\vert n-\xi\vert +l+1)}\left(\frac{M  r\, \rho}{ 4\hbar t}\right)^{2l}.
\end{align*}
Taking absolute values and using that $\Gamma(\alpha+1)\Gamma(\beta+1)\leq\Gamma(\alpha+\beta+2)$ for any $\alpha,\beta\in\mathbb{R}$ we have
\begin{align*}
    \left\vert F_{\xi}(r,\phi,\theta,t,\rho)\right\vert &\leq \sum_{n=-\infty}^{\infty} \left(\frac{ M r\, \rho }{ 4\hbar t}\right)^{\vert n-\xi\vert} \sum_{l=0}^{\infty}\frac{1}{l!\, \Gamma(\vert n-\xi\vert +l+1)}\left(\frac{ M r\, \rho }{ 4\hbar t}\right)^{2l}\\
    &\leq \sum_{n=-\infty}^{\infty} \left(\frac{ M r\, \rho }{ 4\hbar t}\right)^{\vert n-\xi\vert} \sum_{l=0}^{\infty}\frac{l+1+\vert n-\xi\vert}{(l!)^2\, \Gamma(\vert n-\xi\vert +1)}\left(\frac{ M r\, \rho }{ 4\hbar t}\right)^{2l}\\
    &\leq I_0\left(\frac{Mr\rho}{2\hbar t}\right)\left\{\sum_{n=-\infty}^{\infty} \left(\frac{ M r\, \rho }{ 4\hbar t}\right)^{\vert n-\xi\vert} \frac{1+\left\vert n-\xi\right\vert}{\Gamma(\left\vert n-\xi\right\vert+1)}+\sum_{n=-\infty}^{\infty} \frac{\left(\frac{ M r\, \rho }{ 4\hbar t}\right)^{\vert n-\xi\vert}}{\Gamma(\left\vert n-\xi\right\vert+1)}\right\},
\end{align*}
where we recall that
$
\displaystyle I_0(x)=\sum_{\ell=0}^\infty \frac1{(\ell!)^2} \left(\frac{x}{2}\right)^{2\ell}$.
Observe that
$\displaystyle\sum_{\ell=0}^\infty \frac{\ell} {(\ell!)^2} \left(\frac{x}{2}\right)^{2\ell}
\leq\sum_{\ell=0}^\infty \frac{1} {(\ell!)^2} x^{2\ell}=I_0(2x)$ and $I_0(x)\leq I_0(2x)$. We start estimating the first sum. Splitting the sum in an obvious fashion, we have 
\begin{align*}
   & \sum_{n=-\infty}^{+\infty}\left(\frac{ M r\, \rho }{ 4\hbar t}\right)^{\vert n-\xi\vert} \frac{1+\left\vert n-\xi\right\vert}{\Gamma(\left\vert n-\xi\right\vert+1)}
   \\
   &=\sum_{n>\xi}\left(\frac{ M r\, \rho }{ 4\hbar t}\right)^{ n-\xi} \frac{1+ n-\xi}{\Gamma(n-\xi+1)}+\sum_{n\leq\xi}\left(\frac{ M r\, \rho }{ 4\hbar t}\right)^{ n-\xi} \frac{1- n+\xi}{\Gamma(-n+\xi+1)}.
\end{align*}
Let $\xi=\xi_i+\xi_f$ where $\xi_i\in\mathbb{Z}$ and $\xi_f\in [0,1]$. It follows that
\begin{align*}
\sum_{n>\xi}\left(\frac{ M r\, \rho }{ 4\hbar t}\right)^{ n-\xi} \frac{1+ n-\xi}{\Gamma(n-\xi+1)}&=\left(\frac{ M r\, \rho }{ 4\hbar t}\right)^{-\xi_f}\sum_{n=\xi_i+1}^{\infty}\left(\frac{ M r\, \rho }{ 4\hbar t}\right)^{n-\xi_i} \frac{1+ n-\xi_i-\xi_f}{\Gamma(n-\xi_i+1-\xi_f)}\\
&=\left(\frac{ M r\, \rho }{ 4\hbar t}\right)^{-\xi_f}\sum_{k=1}^{\infty}\left(\frac{ M r\, \rho }{ 4\hbar t}\right)^{k} \frac{1+ k-\xi_f}{\Gamma(k+1-\xi_f)}\\
&\leq \left(\frac{ M r\, \rho }{ 4\hbar t}\right)^{-\xi_f}\sum_{k=1}^{\infty}\left(\frac{ M r\, \rho }{ 4\hbar t}\right)^{k} \frac{1+ k-\xi_f}{(k-1)!}\\
&\leq \left(\frac{ M r\, \rho }{ 4\hbar t}\right)^{-\xi_f}\sum_{k=1}^{\infty}\left(\frac{ M r\, \rho }{ 4\hbar t}\right)^{k} \frac{1+ k}{(k-1)!}\\
&=\left(\frac{ M r\, \rho }{ 4\hbar t}\right)^{1-\xi_f}\left(2+\frac{ M r\, \rho }{ 4\hbar t}\right)e^{\frac{ M r\, \rho }{ 4\hbar t}},
\end{align*}
where we used the substitution $k=n-\xi_i$ and that $\Gamma(k+1-\xi_f)\geq \Gamma(k)=(k-1)!$. In a similar way, the second sum becomes
\begin{align*}
\sum_{n\leq\xi}\left(\frac{ M r\, \rho }{ 4\hbar t}\right)^{\xi-n} \frac{1+ \xi-n}{\Gamma(\xi-n+1)}&=\left(\frac{ M r\, \rho }{ 4\hbar t}\right)^{\xi_f}\sum_{n=-\infty}^{\xi_i}\left(\frac{ M r\, \rho }{ 4\hbar t}\right)^{\xi_i-n} \frac{1+\xi_f+\xi_i-n}{\Gamma(\xi_f+\xi_i-n+1)}\\
&=\left(\frac{ M r\, \rho }{ 4\hbar t}\right)^{\xi_f}\sum_{k=0}^{\infty}\left(\frac{ M r\, \rho }{ 4\hbar t}\right)^k \frac{1+\xi_f+k}{\Gamma(\xi_f+k+1)}.
\end{align*}
As $\Gamma(k+1+\xi_f)\geq \Gamma(k+1)=k!$ and $0\leq \xi_f<1$, we get that 
\begin{align*}
\left(\frac{ M r\, \rho }{ 4\hbar t}\right)^{\xi_f}\sum_{k=0}^{\infty}\left(\frac{ M r\, \rho }{ 4\hbar t}\right)^k \frac{1+\xi_f+k}{\Gamma(\xi_f+k+1)}&\leq \left(\frac{ M r\, \rho }{ 4\hbar t}\right)^{\xi_f}\sum_{k=0}^{\infty}\left(\frac{ M r\, \rho }{ 4\hbar t}\right)^k \frac{k+2}{k!}\\
&\leq 2\left(\frac{ M r\, \rho }{ 4\hbar t}\right)^{\xi_f}\sum_{k=0}^{\infty}\left(\frac{ M r\, \rho }{ 4\hbar t}\right)^k \frac{k+1}{k!}\\
&=2\left(\frac{ M r\, \rho }{ 4\hbar t}\right)^{\xi_f}\left(1+\frac{ M r\, \rho }{ 4\hbar t}\right)e^{\frac{ M r\, \rho }{ 4\hbar t}}.
\end{align*}
We take the same approach to deal with the second sum, giving
\[
\begin{split}
    \sum_{n=-\infty}^{\infty} \frac{\left(\frac{ M r\, \rho }{ 4\hbar t}\right)^{\vert n-\xi\vert}}{\Gamma(\left\vert n-\xi\right\vert+1)}&=\sum_{n\leq\xi} \frac{\left(\frac{ M r\, \rho }{ 4\hbar t}\right)^{\xi-n}}{\Gamma( \xi-n+1)}+\sum_{n>\xi} \frac{\left(\frac{ M r\, \rho }{ 4\hbar t}\right)^{n-\xi}}{\Gamma(n-\xi+1)}
    \\
    &=\left(\frac{ M r\, \rho }{ 4\hbar t}\right)^{\xi_f}\sum_{k=0}^{\infty} \frac{\left(\frac{ M r\, \rho }{ 4\hbar t}\right)^k}{\Gamma(\xi_f+k+1)}+\left(\frac{ M r\, \rho }{ 4\hbar t}\right)^{-\xi_f}\sum_{k=1}^{\infty} \frac{\left(\frac{ M r\, \rho }{ 4\hbar t}\right)^{k}}{\Gamma(k+1-\xi_f)}
\end{split}
\]
so that
        \[
\begin{split}
    \sum_{n=-\infty}^{\infty} \frac{\left(\frac{ M r\, \rho }{ 4\hbar t}\right)^{\vert n-\xi\vert}}{\Gamma(\left\vert n-\xi\right\vert+1)}
    &\leq \left(\frac{ M r\, \rho }{ 4\hbar t}\right)^{\xi_f}\sum_{k=0}^{\infty} \frac{\left(\frac{ M r\, \rho }{ 4\hbar t}\right)^k}{k!}+\left(\frac{ M r\, \rho }{ 4\hbar t}\right)^{-\xi_f}\sum_{k=1}^{\infty} \frac{\left(\frac{ M r\, \rho }{ 4\hbar t}\right)^{k}}{(k-1)!}\\
    &= e^{\frac{ M r\, \rho }{ 4\hbar t}}\left[\left(\frac{ M r\, \rho }{ 4\hbar t}\right)^{\xi_f}+\left(\frac{ M r\, \rho }{ 4\hbar t}\right)^{1-\xi_f}\right].
\end{split}
\]
Collecting these estimates we finally obtain
\begin{align*}
    \vert F_{\xi}(r,\phi,\theta,t,\rho)\vert &\leq e^{\frac{ M r\, \rho }{ 4\hbar t}}I_0\left(\frac{Mr\rho}{2\hbar t}\right)\left\{\left(\frac{ M r\, \rho }{ 4\hbar t}\right)^{1-\xi_f}\right.\left.\left(2+\frac{ M r\, \rho }{ 4\hbar t}\right)+2\left(\frac{ M r\, \rho }{ 4\hbar t}\right)^{\xi_f}\left(1+\frac{ M r\, \rho }{ 4\hbar t}\right)\right.\\
    &\qquad\qquad\qquad\qquad\qquad\qquad\qquad \qquad\left.+\left[\left(\frac{ M r\, \rho }{ 4\hbar t}\right)^{\xi_f}+\left(\frac{ M r\, \rho }{ 4\hbar t}\right)^{1-\xi_f}\right]\right\}\\
    &\leq e^{\frac{ M r\, \rho }{ 4\hbar t}}I_0\left(\frac{Mr\rho}{2\hbar t}\right)\left\{\left(\frac{ M r\, \rho }{ 4\hbar t}\right)^{1-\xi_f}\left(3+\frac{Mr\rho}{4\hbar t}\right)+\left(\frac{ M r\, \rho }{ 4\hbar t}\right)^{\xi_f}\left(3+\frac{Mr\rho}{2\hbar t}\right)\right\}.
\end{align*}
This gives the statement
 by recalling the known, alternate expression, of the Bessel function, \cite[p. 181]{Watson}, we have
 $$
 I_\alpha(x)=\frac{1}{\pi}\int_0^\pi e^{x\cos\theta}\cos\theta d\theta-
 \frac{\sin(\alpha\pi)}{\pi}\int_0^\infty e^{-x\cosh(t)-\alpha t }dt,\quad \operatorname{Re}(x)>0
 $$
 from which we deduce that there exists a constant $K>0$ such that $|I_0(x)|\leq Ke^{x}$ for $x>0$.
\end{proof}

We want to use the Green's function $K(x'',x',t)$ to evolve the data $\psi_{0}(r,\phi)$.  This involves an integral that is difficult to compute and whose convergence is, a priori, in question.  However, using Fresnel integrals (changing the integral to one that is equivalent via an argument in complex analysis), we arrive at an expression for the evolution of the initial data that is easier to work with.  This is the content of the next proposition.

\begin{proposition}[Fresnel integral representation of the propagator]\label{PropPsiA}
Let $a,b\in \mathbb{R}$ and set
$$
\psi_0(r, \phi):=e^{ia x+iby}=e^{ia r\cos \phi +ibr\sin \phi}.
$$
Then the solution   
$$
\psi_{a,b}(r,\phi,t)
=\frac{M}{2\pi \hbar t} \int_{0}^{2\pi}\int_{0}^{\infty}  e^{\frac{iM}{2 \hbar t} (\rho^2+r^2)} F_{\xi}(r,\phi,\theta,t,\rho) e^{ia \rho\cos \theta +ib\rho\sin \theta}\ \rho d\rho d\theta
$$
can be written as a Fresnel integral as
\begin{equation}\label{solutFRESNEL}
\psi_{a,b}(r,\phi,t)=
i\frac{M}{2\pi \hbar t}e^{\frac{iM}{2 \hbar t} r^2}
\int_0^{2\pi}\int_0^\infty F_{\xi}(r,\phi,\theta,t,ue^{i\frac{\pi}{4}}) e^{\frac{-M}{2 \hbar t} u^2 } 
e^{i ue^{i\pi/4}(a\cos \theta +b\sin \theta)} u du d\theta.
\end{equation}
\end{proposition}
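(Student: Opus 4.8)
The identity \eqref{solutFRESNEL} is, at its core, a rotation of the contour of the inner ($\rho$) integral by the angle $\pi/4$ in the complex plane: along the ray $\rho=ue^{i\pi/4}$ one has $\rho^2=iu^2$, so the quadratic phase $e^{\frac{iM}{2\hbar t}\rho^2}$ turns into the Gaussian $e^{-\frac{M}{2\hbar t}u^2}$, while $\rho\,d\rho=i\,u\,du$ produces the prefactor $i$. First I would pull $e^{\frac{iM}{2\hbar t}r^2}$ out of the integral, freeze $\theta\in[0,2\pi]$, write $s_\theta:=a\cos\theta+b\sin\theta$ (so $|s_\theta|\le\sqrt{a^2+b^2}$ uniformly in $\theta$) and $F(\rho):=F_\xi(r,\phi,\theta,t,\rho)$, and reduce the claim to the one-dimensional identity
\[
\int_0^\infty e^{\frac{iM}{2\hbar t}\rho^2}F(\rho)\,e^{is_\theta\rho}\,\rho\,d\rho
=i\int_0^\infty e^{-\frac{M}{2\hbar t}u^2}F(ue^{i\pi/4})\,e^{is_\theta ue^{i\pi/4}}\,u\,du ,
\]
since multiplying by $\frac{M}{2\pi\hbar t}e^{\frac{iM}{2\hbar t}r^2}$ and integrating in $\theta$ then yields \eqref{solutFRESNEL}.

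Next I would check that $G(\rho):=e^{\frac{iM}{2\hbar t}\rho^2}F(\rho)e^{is_\theta\rho}\rho$ is holomorphic in the interior of the closed sector $\Sigma:=\{\rho:0\le\arg\rho\le\pi/4\}$ and continuous up to its boundary, so that Cauchy's theorem applies on the pie slice $\Sigma\cap\{|\rho|\le R\}$. The point needing care is that each summand of \eqref{def:FFunction} carries a factor $\rho^{|n-\xi|}$, which is multivalued when $\xi\notin\mathbb Z$; the principal branch is holomorphic on $\mathbb C\setminus(-\infty,0]\supset\Sigma\setminus\{0\}$, and since $|n-\xi|\ge 0$ the extra factor $\rho$ from $\rho\,d\rho$ makes $G(\rho)\to0$ as $\rho\to0$ in $\Sigma$, so the origin causes no trouble. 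Uniform convergence of the series on compact subsets of $\Sigma$ — hence holomorphy of the sum — follows from the Bessel estimates already carried out in the proof of Lemma~\ref{EFFEXI}; that same computation in fact gives $|F(\rho)|\le q(|\rho|)e^{c|\rho|}$ with $q$ of polynomial growth, \emph{uniformly for $\rho\in\Sigma$}, since only $|\rho|$ enters the bound. Applying Cauchy's theorem and letting $R\to\infty$ would give the rotated identity once the circular-arc contribution is shown to vanish.

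The vanishing of the arc is the main obstacle. On $\rho=Re^{i\alpha}$ one has $|e^{\frac{iM}{2\hbar t}\rho^2}|=e^{-\frac{M}{2\hbar t}R^2\sin 2\alpha}$, which decays only for $\alpha$ bounded away from $0$, while near the real axis the factors $|F(\rho)|\le q(R)e^{cR}$ and $|e^{is_\theta\rho}|\le e^{|s_\theta|R}$ are not dominated by the quadratic phase, so a bare contour shift does not close. I would circumvent this by Gaussian regularization: replace $\frac{iM}{2\hbar t}$ by $\frac{iM}{2\hbar t}-\varepsilon$, $\varepsilon>0$. Then on the arc $\bigl|e^{(\frac{iM}{2\hbar t}-\varepsilon)\rho^2}\bigr|=e^{-\frac{M}{2\hbar t}R^2\sin2\alpha-\varepsilon R^2\cos2\alpha}\le e^{-\delta_\varepsilon R^2}$ with $\delta_\varepsilon=\min\!\bigl(\frac{M}{2\hbar t},\varepsilon\bigr)>0$ (using $\sin2\alpha+\cos2\alpha\ge1$ on $[0,\pi/4]$), so the arc term is bounded by $\frac{\pi}{4}R^2q(R)e^{(c+|s_\theta|)R}e^{-\delta_\varepsilon R^2}\to0$, and both ray integrals are absolutely convergent (weight $e^{-\varepsilon\rho^2}$ on $[0,\infty)$, weight $e^{-\frac{M}{2\hbar t}u^2}$ on the rotated ray). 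Cauchy's theorem then gives, for every $\varepsilon>0$,
\[
\int_0^\infty e^{(\frac{iM}{2\hbar t}-\varepsilon)\rho^2}F(\rho)e^{is_\theta\rho}\rho\,d\rho
=i\int_0^\infty e^{(-\frac{M}{2\hbar t}-i\varepsilon)u^2}F(ue^{i\pi/4})e^{is_\theta ue^{i\pi/4}}u\,du .
\]

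Finally I would send $\varepsilon\to0^+$. On the right, dominated convergence applies with the $\varepsilon$-independent majorant $u\,q(u)e^{cu}e^{|s_\theta|u/\sqrt2}e^{-\frac{M}{2\hbar t}u^2}\in L^1(0,\infty)$ (here $|F(ue^{i\pi/4})|\le q(u)e^{cu}$ is the second estimate of Lemma~\ref{EFFEXI}), producing the desired right-hand side; on the left, $\varepsilon\to0^+$ recovers the original oscillatory $\rho$-integral defining $\psi_{a,b}$ — this is exactly the Fresnel (i.e.\ $i\varepsilon$-regularized) sense in which that integral is to be read, consistent with the analytic continuation from imaginary time that underlies the Feynman propagator. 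Reinstating $\frac{M}{2\pi\hbar t}e^{\frac{iM}{2\hbar t}r^2}$ and the $\theta$-integration — all the above bounds being uniform in $\theta\in[0,2\pi]$, so the same dominated-convergence argument governs the double integral — then gives precisely \eqref{solutFRESNEL}.
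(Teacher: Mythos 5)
Your proof is correct and follows the same skeleton as the paper's: pull out $e^{\frac{iM}{2\hbar t}r^2}$, freeze $\theta$, and rotate the $\rho$-contour through the pie slice $\gamma_1\cup\gamma_2\cup\gamma_3$ of opening $\pi/4$, with $\rho\,d\rho=i\,u\,du$ supplying the prefactor $i$ and $\rho^2=iu^2$ turning the phase into a Gaussian. The genuine divergence between the two arguments is the arc. The paper estimates $|I_2(R)|$ by replacing $e^{-\frac{M}{2\hbar t}R^2\sin(2\alpha)}$ with $e^{-\frac{M}{2\hbar t}R^2}$ uniformly on $[0,\pi/4]$; this inequality holds only at $\alpha=\pi/4$ and fails badly near $\alpha=0$, where $\sin(2\alpha)\to 0$ and the surviving factors $q(R)e^{cR}e^{|s_\theta|R}$ are unbounded, so the paper's arc bound as written does not close the contour --- your diagnosis that no absolute-value estimate can do so near the real axis is accurate. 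Your $\varepsilon$-regularization (using $\sin 2\alpha+\cos 2\alpha\ge 1$ on the sector to get a genuine $e^{-\delta_\varepsilon R^2}$ on the whole arc, with the rotated-ray integrand dominated uniformly in $\varepsilon$) repairs this rigorously, and it is in fact the route the authors sketch in a commented-out appendix of the source. The trade-off is that your left-hand limit $\varepsilon\to 0^+$ identifies the original $\rho$-integral only in the Abel/Fresnel-regularized sense rather than as an improper Riemann limit $\lim_{R\to\infty}I_1(R)$; since the paper itself concedes that the convergence of that integral is ``a priori, in question,'' this is a defensible (arguably the intended) reading, but if one wants the improper-integral statement verbatim one must add an Abelian argument or a stationary-phase/integration-by-parts treatment of the tail. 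Your remark on the branch of $\rho^{|n-\xi|}$ for $\xi\notin\mathbb{Z}$ --- which the paper elides by declaring the integrand entire --- is also a point the paper's proof should, strictly speaking, address.
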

\begin{proof}
 We consider 
\begin{align*}
\psi_{a,b}(r,\phi,t)
& = \frac{M}{2\pi \hbar t} \int_{0}^{2\pi}\int_{0}^{\infty} e^{\frac{iM}{2 \hbar t} (\rho^2+r^2)} F_{\xi}(r,\phi,\theta,t,\rho) e^{ia \rho\cos \theta +ib\rho\sin \theta}\ \rho d\rho  d\theta\\
& = \frac{M}{2\pi \hbar t}e^{\frac{iM}{2 \hbar t} r^2}\int_{0}^{2\pi}\int_{0}^{\infty}  e^{\frac{iM}{2 \hbar t} \rho^2} F_{\xi}(r,\phi,\theta,t,\rho) e^{ia \rho\cos \theta +ib\rho\sin \theta}\ \rho d\rho d\theta.
\end{align*}
Fix $\theta\in[0,2\pi]$.  We now interpret the integral in $\rho$ as a Fresnel integral with the substitution $\rho\to z\in \mathbb{C}$.  Define
$$
g(z):=
   e^{\frac{iM}{2 \hbar t} z^2} F_{\xi}(r,\phi,\theta,t,z)e^{ia z\cos \theta +ibz\sin \theta}\ z.
$$
The function $g(z)$ is holomorphic (and entire) and so for any closed curve $\gamma$ we have:
$$
\int_\gamma g(z)dz=
\int_\gamma  e^{\frac{iM}{2 \hbar t} z^2} F_{\xi}(r,\phi,\theta,t,z)e^{i z(a\cos \theta +b\sin \theta)}\ z dz=0.
$$
We proceed by considering the path of integration
 $$
 \gamma=\gamma_1\cup\gamma_2\cup\gamma_3
 $$ where, for $z=u e^{i\alpha}$, and $R>0$ we set
$$
\gamma_1=\{ 0\leq u\leq R,\ \ \alpha =0\}, \ \ \ \
\gamma_2=\{ u= R,\ \ 0\leq \alpha \leq \pi/4\},
\ \ \ \  -\gamma_3=\{ 0\leq u\leq R,\ \ \alpha =\pi/4\},
$$
where the sign in front of $\gamma_3$ takes into account the orientation.
We will exploit the fact that
$$
\int_\gamma g(z)dz=0
$$
to change the integral we are after into one that has Gaussian decay and is in the conclusion of the Proposition.

Firstly, let us consider the integral on $\gamma_1$.  Here the parameterization of $\gamma_1$ is given by $z=u$ and so $dz=du$ and then using this we see that $I_1(R)$ becomes
\begin{align*}
I_{1}(R)&:=\int_{\gamma_1}g(z)dz
\\
&
=\int_{0}^R e^{\frac{iM}{2 \hbar t} u^2} F_{\xi}(r,\phi,\theta,t,ue^{i\alpha})
e^{i u(a\cos \theta +b\sin \theta)}\ u du.
\end{align*}

We next consider the integral over the curve $\gamma_2$.  Here we have the parameterization $z=R e^{i\alpha}$, which gives $dz=Ri e^{i\alpha} d \alpha$ and gives the following for $I_2(R)$,
\begin{align*}
I_{2}(R)&:=\int_{\gamma_2}g(z)dz\\
&=iR^2\int_{0}^{\pi/4} e^{\frac{iM}{2 \hbar t} R^2e^{2i\alpha}}F_{\xi}(r,\phi,\theta,t,Re^{i\alpha})
e^{i Re^{i\alpha}(a\cos \theta +b\sin \theta)}e^{2i\alpha}d\alpha\\
&=iR^2\int_{0}^{\pi/4} e^{\frac{M}{2 \hbar t} R^2(i\cos(2\alpha)-\sin(2\alpha))}F_{\xi}(r,\phi,\theta,t,Re^{i\alpha})
e^{i Re^{i\alpha}(a\cos \theta +b\sin \theta)}e^{2i\alpha}d\alpha.
\end{align*}

\noindent Hence, 

\begin{align*}
\vert I_{2}(R)\vert&:=R^2\int_{0}^{\pi/4} e^{-\frac{M}{2 \hbar t} R^2\sin(2\alpha)}\left\vert F_{\xi}(r,\phi,\theta,t,Re^{i\alpha})\right\vert
e^{-R\sin(\alpha)(a\cos \theta +b\sin \theta)}d\alpha\\
&\leq R^2e^{-\frac{M}{2 \hbar t} R^2}\int_{0}^{\pi/4}\left\vert F_{\xi}(r,\phi,\theta,t,Re^{i\alpha})\right\vert e^{-R\sin(\alpha)(a\cos\theta +b\sin\theta)}d\alpha.
\end{align*}
The second integral goes to zero because of the estimates on $\left\vert F_{\xi}(r,\phi,\theta,t,Re^{i\alpha})\right\vert$ from \eqref{e:F_Estimates} which shows that this integral can be controlled by $q(R)e^{-cR^2}e^{\widetilde{c}R}$ for some function $q$ with at most polynomial growth and constants $c$ and $\widetilde{c}$, but this clearly goes to $0$ as $R\to\infty$.

Finally, we turn to the integral over the path $\gamma_3$.  Here we have that $z=ue^{i\frac{\pi}{4}}$ and so $dz=e^{i\frac{\pi}{4}}du$ and then using this parameterization we see that 
\begin{align*}
    I_{3}(R)&:=\int_{\gamma_3}g(z)dz=-\int_0^{R}g(ue^{i\frac{\pi}{4}})e^{i\frac{\pi}{4}} du
\\
&= -e^{i\frac{\pi}{2}}\int_{0}^{R} e^{\frac{iM}{2 \hbar t} u^2 e^{i \frac{\pi}{2}}} F_{\xi}(r,\phi,\theta,t,ue^{i\frac{\pi}{4}})
e^{i ue^{i \frac{\pi}{4}}(a\cos \theta +b\sin \theta)}\ u du 
\\
&
= -i\int_{0}^{R} e^{\frac{iM}{2 \hbar t} u^2 i} F_{\xi}(r,\phi,\theta,t,ue^{i\frac{\pi}{4}})
e^{i ue^{i\frac{\pi}{4}}(a\cos \theta +b\sin \theta)}\ u du
\\
&
= -i\int_{0}^{R} e^{\frac{-M}{2 \hbar t} u^2 }F_{\xi}(r,\phi,\theta,t,ue^{i\frac{\pi}{4}})
e^{i ue^{i\frac{\pi}{4}}(a\cos \theta +b\sin \theta)} u du.
\end{align*}
Using these integrals, we have that 
$$
\int_{\gamma}g(z)dz=\sum_{\ell=1}^3I_{\ell}(R)=0
$$
and upon taking the limit as $R\to+\infty$ we
have
$$
\lim_{R\to + \infty}I_{1}(R)=-\lim_{R\to +\infty}I_{3}(R)
$$
since $\displaystyle\lim_{R\to + \infty}I_{2}(R)=0$.  This then gives that
\begin{align}
\lim_{R\to +\infty}I_{1}(R) & = \int_{0}^\infty e^{\frac{iM}{2 \hbar t} u^2} F_{\xi}(r,\phi,\theta,t,ue^{i\alpha}) e^{i u(a\cos \theta +b\sin \theta)}\ u du \notag\\ 
&=-\lim_{R\to +\infty}I_{3}(R)
=i\int_{0}^{\infty} e^{\frac{-M}{2 \hbar t} u^2 } F_{\xi}(r,\phi,\theta,t,ue^{i\frac{\pi}{4}})
e^{i ue^{i\frac{\pi}{4}}(a\cos \theta +b\sin \theta)}  u du. \label{FREN}
\end{align}
We can now use \eqref{FREN} and integrate the resulting identity in $\theta$ to replace the integral:
$$
\psi_{a,b}(r,\phi,t):=\frac{M}{2\pi \hbar t}e^{\frac{iM}{2 \hbar t} r^2}\int_{0}^{2\pi}\int_{0}^{\infty}  e^{\frac{iM}{2 \hbar t} \rho^2} F_{\xi}(r,\phi,\theta,t,\rho) e^{ia \rho\cos \theta +ib\rho\sin \theta}\ \rho d\rho d\theta
$$
with the Fresnel integral, given by
$$
\psi_{a,b}(r,\phi,t)=
i\frac{M}{2\pi \hbar t}e^{\frac{iM}{2 \hbar t} r^2}
\int_0^{2\pi}\int_0^\infty F_{\xi}(r,\phi,\theta,t,ue^{i\frac{\pi}{4}}) e^{\frac{-M}{2 \hbar t} u^2 } 
e^{i ue^{i\frac{\pi}{4}}(a\cos \theta +b\sin \theta)} u du d\theta
$$
giving the statement of the Proposition.
\end{proof}

\section{Aharonov-Bohm Infinite Order Differential Operators and \\ the Supershift Property}
\label{s:ProofMain}

Based on the Fresnel integral representation of the solution of the Schr\"odinger equation,
given in Proposition \ref{PropPsiA}, see (\ref{solutFRESNEL}), we further have to manipulate this  solution
to represent it as a power series of the coefficients $a$ and $b$ appearing in the superoscillatory function associated with the initial datum.

\begin{lemma}[Series expansion of the solution in the parameters $a$ and $b$]
\label{l:Series}
Let $\psi_{a,b}(r,\phi,t)$ be the solution of the Schr\"odinger equation given by   \eqref{solutFRESNEL}. Then we have
\begin{align*}
&\psi_{a,b}(r,\phi,t)=
i\frac{M}{2\pi \hbar t}e^{\frac{iM}{2 \hbar t}r^2}\sum_{m=0}^{\infty} \frac{\left(i e^{i\pi/4}\right)^m}{m!}
\sum_{\ell=0}^m\binom{m}{\ell}a^{m-\ell}b^\ell c_{m,\ell}(r,\phi,t),
 \end{align*}
where
\begin{equation}\label{coeffcml}
c_{m,\ell}(r,\phi,t):=\int_0^{2\pi}\int_0^\infty
e^{\frac{-M}{2 \hbar t}u^2} F_{\xi}(r,\phi,\theta,t,ue^{i\frac{\pi}{4}})
 \cos^{m-\ell}\theta\sin^{\ell}\theta u^{m+1} du
  d\theta.
 \end{equation}

\end{lemma}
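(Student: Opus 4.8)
The plan is to expand the complex exponential $e^{iue^{i\pi/4}(a\cos\theta+b\sin\theta)}$ in \eqref{solutFRESNEL} into its Taylor series and integrate term by term. Using $e^w=\sum_{m\ge 0}w^m/m!$ with $w=iue^{i\pi/4}(a\cos\theta+b\sin\theta)$, together with the binomial theorem applied to $(a\cos\theta+b\sin\theta)^m$, one gets the pointwise identity
\begin{align*}
e^{iue^{i\pi/4}(a\cos\theta+b\sin\theta)}\,u
&=\sum_{m=0}^\infty\frac{(iue^{i\pi/4})^m}{m!}(a\cos\theta+b\sin\theta)^m\,u\\
&=\sum_{m=0}^\infty\frac{(ie^{i\pi/4})^m}{m!}\,u^{m+1}\sum_{\ell=0}^m\binom{m}{\ell}a^{m-\ell}b^\ell\cos^{m-\ell}\theta\,\sin^\ell\theta .
\end{align*}
Substituting this into the Fresnel representation of $\psi_{a,b}$ and interchanging the sum over $m$ with the double integral $\int_0^{2\pi}\!\int_0^\infty$ yields exactly the asserted formula, with $c_{m,\ell}(r,\phi,t)$ given by \eqref{coeffcml}.

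The only substantive point is to justify the interchange of $\sum_m$ with the integral, which I would do via Fubini--Tonelli after producing an integrable majorant. Bounding each summand in absolute value and resumming gives, since $|a\cos\theta+b\sin\theta|\le|a|+|b|$,
\begin{align*}
\sum_{m=0}^\infty\frac{u^{m+1}}{m!}\,|a\cos\theta+b\sin\theta|^m\,\bigl|F_\xi(r,\phi,\theta,t,ue^{i\pi/4})\bigr|\,e^{-\frac{M}{2\hbar t}u^2}
&=u\,\bigl|F_\xi(r,\phi,\theta,t,ue^{i\pi/4})\bigr|\,e^{-\frac{M}{2\hbar t}u^2}\,e^{u\,|a\cos\theta+b\sin\theta|}\\
&\le u\,\bigl|F_\xi(r,\phi,\theta,t,ue^{i\pi/4})\bigr|\,e^{-\frac{M}{2\hbar t}u^2}\,e^{u(|a|+|b|)} .
\end{align*}
By Lemma \ref{EFFEXI} we have $\bigl|F_\xi(r,\phi,\theta,t,ue^{i\pi/4})\bigr|\le q(u)e^{cu}$ for $u\ge 0$, with $q$ of at most polynomial growth and $c>0$ depending only on $r,M,t,\xi$ (in particular independent of $\theta$). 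Hence the majorant is dominated by $q(u)\,u\,e^{(c+|a|+|b|)u}\,e^{-\frac{M}{2\hbar t}u^2}$, and since $t>0$ the Gaussian factor $e^{-\frac{M}{2\hbar t}u^2}$ overwhelms the polynomial-times-exponential growth, so this function is integrable on $[0,2\pi]\times(0,\infty)$ (the $\theta$-integration contributing merely a factor $2\pi$). This legitimizes term-by-term integration.

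I expect the main (essentially the only) obstacle to be setting up this domination cleanly: one must note that the estimate in Lemma \ref{EFFEXI} is precisely stated for $F_\xi$ evaluated at $\rho e^{i\pi/4}$ with $\rho\ge 0$ real — exactly the regime occurring here — and that the polynomial $q$ and the constant $c$ do not depend on $\theta$. Once the integrable majorant is in hand, Fubini--Tonelli applies verbatim, the sum and the integral commute, the factors $(iue^{i\pi/4})^m = (ie^{i\pi/4})^m u^m$ combine with the outer $u$ to produce $u^{m+1}$, and the resulting expression is manifestly the stated power series in $a$ and $b$ with coefficients $c_{m,\ell}(r,\phi,t)$.
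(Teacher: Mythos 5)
Your proposal is correct and follows essentially the same route as the paper: expand the exponential (and the binomial $(a\cos\theta+b\sin\theta)^m$), then justify term-by-term integration by dominating the summed absolute values with $q(u)\,u\,e^{(c+|a|+|b|)u}e^{-\frac{M}{2\hbar t}u^2}$ via Lemma \ref{EFFEXI}, which the Gaussian renders integrable. The paper phrases the interchange as dominated convergence applied to the partial sums rather than Fubini--Tonelli, but the majorant and the substance of the argument are identical.
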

\begin{proof}
    
We consider 
$$
\psi_{a,b}(r,\phi,t)=
i\frac{M}{2\pi \hbar t}e^{\frac{iM}{2 \hbar t} r^2}
\int_0^{2\pi}\int_0^\infty F_{\xi}(r,\phi,\theta,t,ue^{i\frac{\pi}{4}}) e^{\frac{-M}{2 \hbar t} u^2 }
e^{i ue^{i\pi/4}(a\cos \theta +b\sin \theta)} u du d\theta
$$
and we expand the exponential $e^{i ue^{i\pi/4}(a\cos \theta +b\sin \theta)}$ to get

$$\psi_{a,b}(r,\phi,t)=i\frac{M}{2\pi \hbar t}
e^{\frac{iM}{2 \hbar t}r^2}
\int_0^{2\pi}\int_0^\infty e^{\frac{-M}{2 \hbar t} u^2 }  F_{\xi}(r,\phi,\theta,t,ue^{i\frac{\pi}{4}}) \sum_{m=0}^{\infty}
\frac{\left(i ue^{i\pi/4}(a\cos \theta+b\sin \theta)\right)^m}{m!}u du
  d\theta.$$
 We want to exchange the infinite sum and the integrals involved.  This will be accomplished with the Lebesgue Dominated Convergence Theorem and the estimates on the function $F_{\xi}(r,\phi,\theta,t,\rho)$ from \eqref{e:F_Estimates}.  To do this, for $L\in\mathbb{N}$, define
 $$
 \displaystyle G_{\xi,L}(r,\phi,\theta,t,ue^{i\frac{\pi}{4}}):=e^{\frac{-M}{2 \hbar t}u^2}F_{\xi}(r,\phi,\theta,t,ue^{i\frac{\pi}{4}}) \sum_{m=0}^L
\frac{\left(i ue^{i\pi/4}(a\cos \theta+b\sin \theta)\right)^m}{m!}u.$$
By the computations above, in particular those connected to proving \eqref{e:F_Estimates}, one can show that for $u$ large and for all $\theta\in[0,2\pi)$:
\begin{align*}
 \vert G_{\xi,L}(r,\phi,\theta,t,ue^{i\frac{\pi}{4}})\vert&\lesssim e^{\frac{-M }{2 \hbar t}u^2} e^{\tilde{c}u} q(u) \sum_{m=0}^L\frac{u^{m}(\vert a\vert+\vert b\vert)^m}{m!}\\
 &\lesssim e^{\frac{-M}{2 \hbar t}u^2} e^{\tilde{c}u} q(u) e^{u(\vert a\vert+\vert b\vert)}=e^{\frac{-M }{2 \hbar t}u^2} e^{\tilde{c}u} q(u),
\end{align*}
for some constant $\tilde{c}$ and for some function $q$ with at most polynomial growth in $u$.  This function is then clearly integrable on $[0,2\pi]\times [0,\infty)$. The Lebesgue Dominated Convergence  Theorem then allows for the  interchange of the sum and the double integral, so the function $\psi_{a,b}=\psi_{a,b}(r,\phi,t)$, turns out to be 
\begin{align*}
\psi_{a,b} &=i\frac{M}{2\pi \hbar t}
e^{\frac{iM}{2 \hbar t}r^2}\sum_{m=0}^{\infty} \frac{\left(i e^{i\pi/4}\right)^m}{m!}
\int_0^{2\pi}\int_0^\infty
e^{\frac{-M }{2 \hbar t}u^2} F_{\xi}(r,\phi,\theta,t,ue^{i\frac{\pi}{4}})
\left(a\cos \theta+b\sin\theta\right)^m u^{m+1} du
d\theta.
 \end{align*}

Next, expand the binomial $\displaystyle\left(a\cos \theta+b\sin\theta\right)^m =\sum_{\ell=0}^m\binom{m}{\ell}a^{m-\ell}b^\ell \cos^{m-\ell}\theta\sin^{\ell}\theta
$
and obtain
\begin{align*}
\psi_{a,b}(r,\phi,t) &=i\frac{M}{2\pi \hbar t}
e^{\frac{iM}{2 \hbar t}r^2}\sum_{m=0}^{\infty} \frac{\left(i e^{i\pi/4}\right)^m}{m!}
\sum_{\ell=0}^m\binom{m}{\ell}a^{m-\ell}b^\ell c_{m,\ell}(r,\phi,t),
 \end{align*}
where
\begin{equation*}
c_{m,\ell}(r,\phi,t):=\int_0^{2\pi}\int_0^\infty
e^{\frac{-M}{2 \hbar t}u^2} F_{\xi}(r,\phi,\theta,t,ue^{i\frac{\pi}{4}})
 \cos^{m-\ell}\theta\sin^{\ell}\theta u^{m+1} du
  d\theta
 \end{equation*}
 as claimed in \eqref{coeffcml} and completing the proof of the Lemma.
  \end{proof}

\begin{remark}\label{RKsuporfunc}
The superoscillatory function $Y_n(x,y,a)$ defined in (\ref{YN}),
with coefficients $C_j(n,a)$ in  (\ref{CICONJN}),
contains the entire functions $g$ and $h$,  monotone increasing in $a$, with
  \begin{equation}\label{funzGH}
  g(\lambda)=\sum_{u=0}^\infty g_u\lambda^u, \ \ \ h(\lambda)=\sum_{v=0}^\infty h_v\lambda^v.
  \end{equation}
 This function satisfies
\begin{equation}\label{YNNN}
\lim_{n\to\infty}Y_n(x,y,a)=e^{ig(a)x}e^{ih(a)y}
\end{equation}
pointwise in $x$ and $y$ on compact sets in $\mathbb{R}^2$.
The important fact in our considerations is that the function 
$F_n(x,a)$ defined in (\ref{FNEXP}), extended to an entire complex function, converges in the space $A_1$.
\end{remark}

The crucial problem is now to represent the solution of the Schr\"odinger equation with the superoscillatory initial datum given in Remark \ref{RKsuporfunc} via an infinite order differential operators.
The series expansions of the entire functions $g$ and $h$ given in 
\eqref{funzGH} will be now used to define the 
infinite order differential operator associated with the Aharonov-Bohm field.

\begin{definition}[Infinite order differential operator for the Aharonov-Bohm field]\label{OPER-AHA-BOOOM}
Let $w\in \mathbb{C}$ be an auxiliary variable.
Denote by $\mathcal{D}_{w}$ the complex derivative with respect to the variable $w$ and
define the formal infinite order differential operators, for
$m\geq\ell$,  where $m,\ell\in \mathbb{N}_0$ as 
\begin{equation}\label{operG}
\mathcal{G}_{m,\ell}(\mathcal{D}_w):=\left(\sum_{u=0}^\infty \frac{g_u}{i^u}\mathcal{D}_w^u\right)^{m-\ell},
\end{equation}
and
\begin{equation}\label{operH}
\mathcal{H}_{\ell}(\mathcal{D}_w):=\left(\sum_{v=0}^\infty \frac{h_v}{i^v}\mathcal{D}_w^v\right)^\ell,
\end{equation}
where the coefficients $g_u$ and $h_v$ are given by \eqref{funzGH}.
We define the infinite order differential operator associated with the Aharonov-Bohm field
as
\begin{equation}\label{OPERU}
\mathcal{U}(r,\phi,t;\mathcal{D}_w):=i\frac{M}{2\pi \hbar t}
e^{\frac{iM}{2 \hbar t}r^2}\sum_{m=0}^{\infty} \frac{\left(i e^{i\pi/4}\right)^m}{m!}
\sum_{\ell=0}^m\binom{m}{\ell} c_{m,\ell}(r,\phi,t)
\mathcal{G}_{m,\ell}(\mathcal{D}_w) \mathcal{H}_{\ell}(\mathcal{D}_w),
\end{equation}
where $\mathcal{H}_{\ell}(\mathcal{D}_w)$ 
and $ \mathcal{H}_{\ell}(\mathcal{D}_w)$ are
defined in \eqref{operG} and \eqref{operH} respectively.
\end{definition}

\begin{theorem}[Representation of the solution via the Aharonov-Bohm infinite order differential operator] \label{INFTYORDESSOLUT}
The  solution of the Schr\"odinger equation in the Aharonov-Bohm field with superoscillatory initial datum as in Remark \ref{RKsuporfunc} can be written 
 in terms
of the infinite order differential operators $\mathcal{G}_{m,\ell}(\mathcal{D}_w)$ and $\mathcal{H}_{\ell}(\mathcal{D}_w)$  as
\begin{equation}
\begin{split}
\psi_{g(a),h(a)}(r,\phi,t) &=\mathcal{U}(r,\phi,t;\mathcal{D}_\xi)e^{iaw}\Big|_{w=0}
\\
&
=
i\frac{M}{2\pi \hbar t}e^{\frac{iM}{2 \hbar t}r^2}\sum_{m=0}^{\infty} \frac{\left(i e^{i\pi/4}\right)^m}{m!}
\sum_{\ell=0}^m\binom{m}{\ell} c_{m,\ell}(r,\phi,t)
\mathcal{G}_{m,\ell}(\mathcal{D}_w) \mathcal{H}_{\ell}(\mathcal{D}_w)
e^{iaw}\Big|_{w=0}.\label{e:RepInfDO}
\end{split}
\end{equation}
\end{theorem}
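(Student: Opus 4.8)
The plan is to obtain the identity as a direct consequence of the series expansion in Lemma~\ref{l:Series}, combined with the elementary fact that $e^{iaw}$ is a joint eigenfunction of all powers of $\mathcal{D}_w$. First I would record that $\mathcal{D}_w^u e^{iaw}\big|_{w=0}=(ia)^u$, so that each elementary building block of the operator acts on the exponential as a scalar:
\[
\Bigl(\sum_{u=0}^{\infty}\tfrac{g_u}{i^u}\mathcal{D}_w^u\Bigr)e^{iaw}=\Bigl(\sum_{u=0}^{\infty}g_u a^u\Bigr)e^{iaw}=g(a)\,e^{iaw},
\]
the interchange of summation and differentiation being legitimate because $g$ is entire (so $\sum_u g_u a^u$ converges absolutely) and, more structurally, because $e^{iaw}\in A_{1,|a|}\subset A_1$ while $\sum_u\tfrac{g_u}{i^u}\mathcal{D}_w^u$ is a continuous endomorphism of $A_1$ by Lemma~\ref{BETOA1} and the criterion~\eqref{contA1}. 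Iterating $m-\ell$ times gives $\mathcal{G}_{m,\ell}(\mathcal{D}_w)e^{iaw}=g(a)^{m-\ell}e^{iaw}$, and similarly $\mathcal{H}_{\ell}(\mathcal{D}_w)e^{iaw}=h(a)^{\ell}e^{iaw}$; since at every stage one only ever acts on a scalar multiple of the single exponential $e^{iaw}$, the composition $\mathcal{G}_{m,\ell}(\mathcal{D}_w)\mathcal{H}_{\ell}(\mathcal{D}_w)$ is unambiguous and
\[
\mathcal{G}_{m,\ell}(\mathcal{D}_w)\mathcal{H}_{\ell}(\mathcal{D}_w)e^{iaw}\Big|_{w=0}=g(a)^{m-\ell}h(a)^{\ell}.
\]

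Substituting this into the definition~\eqref{OPERU} of $\mathcal{U}$ I would get
\[
\mathcal{U}(r,\phi,t;\mathcal{D}_w)e^{iaw}\Big|_{w=0}=i\frac{M}{2\pi\hbar t}e^{\frac{iM}{2\hbar t}r^2}\sum_{m=0}^{\infty}\frac{(ie^{i\pi/4})^m}{m!}\sum_{\ell=0}^{m}\binom{m}{\ell}c_{m,\ell}(r,\phi,t)\,g(a)^{m-\ell}h(a)^{\ell},
\]
which is precisely the right-hand side of Lemma~\ref{l:Series} with $a$ and $b$ replaced by $g(a)$ and $h(a)$. Convergence of this double series is guaranteed by exactly the estimate used in the proof of Lemma~\ref{l:Series}: from~\eqref{coeffcml} and~\eqref{e:F_Estimates} one has $|c_{m,\ell}(r,\phi,t)|\le 2\pi\int_0^\infty e^{-\frac{M}{2\hbar t}u^2}q(u)e^{cu}u^{m+1}\,du$, a Gaussian moment which, after absorbing $q(u)e^{cu}$ into the Gaussian and invoking Lemma~\ref{ESTM_GAMMA}, is bounded by $C^m\sqrt{m!}$; hence the series is dominated by $\sum_m (m!)^{-1/2}\bigl(C(|g(a)|+|h(a)|)\bigr)^m<\infty$. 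To finish, I would identify this value with $\psi_{g(a),h(a)}(r,\phi,t)$: the map $(\alpha,\beta)\mapsto\psi_{\alpha,\beta}(r,\phi,t)$ extends to an entire function of $(\alpha,\beta)\in\mathbb{C}^2$ — one differentiates under the Fresnel integral~\eqref{solutFRESNEL}, the Gaussian factor $e^{-\frac{M}{2\hbar t}u^2}$ ensuring absolute convergence — and by Lemma~\ref{l:Series} its Taylor series at the origin is the double series above, so the two agree at $(\alpha,\beta)=(g(a),h(a))$. The first equality in the statement is then just a renaming via $\mathcal{U}$.

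The step I expect to require the most care is making rigorous sense of the two nested infinite-order operators $\mathcal{G}_{m,\ell}$, $\mathcal{H}_{\ell}$ and their interaction with the outer infinite sum over $m$: a priori, the $(m-\ell)$-th power of $\sum_u\tfrac{g_u}{i^u}\mathcal{D}_w^u$ and its composition with the $\mathcal{H}_{\ell}$-factor only make sense once an appropriate topological vector space is fixed. The clean way around this is to perform every computation inside $A_1$, where each elementary factor is continuous (Lemma~\ref{BETOA1}, \eqref{contA1}), compositions and limits of continuous operators behave well, and $A_1$-convergence forces uniform-on-compacta convergence together with convergence of all derivatives, so that evaluation at $w=0$ commutes with all the limits involved; alternatively, and more elementarily, one observes that each operator in play is only ever applied to scalar multiples of the single entire function $e^{iaw}$, on which all the series converge in the strongest sense, making the eigenvalue computation above valid term by term and rendering the functional-analytic machinery unnecessary for this particular identity.
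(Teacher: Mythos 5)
Your proposal is correct and follows essentially the same route as the paper: the paper likewise starts from Lemma~\ref{l:Series}, substitutes $a\mapsto g(a)$, $b\mapsto h(a)$, and then uses the identity $\lambda^{\ell}=\tfrac{1}{i^{\ell}}\mathcal{D}_w^{\ell}e^{iw\lambda}\big|_{w=0}$ (i.e.\ exactly your eigenfunction observation $\mathcal{G}_{m,\ell}(\mathcal{D}_w)\mathcal{H}_{\ell}(\mathcal{D}_w)e^{iaw}\big|_{w=0}=g(a)^{m-\ell}h(a)^{\ell}$, read in the other direction). Your extra remarks on convergence and on working inside $A_1$ are sound but not needed here beyond what the paper defers to Theorem~\ref{CONTI}; note also that since $g$ and $h$ are real-valued on $\mathbb{R}$, the substitution already lands in the real-parameter range of Lemma~\ref{l:Series}, so the analytic continuation to complex $(\alpha,\beta)$ you invoke is not strictly necessary.
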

\begin{proof}
In the function $\psi_{a,b}(r,\phi,t)$, given in  Lemma \ref{l:Series},
we replace the parameter $a$ by the entire function 
$g(a)$ and $b$ by the entire function $h(a)$
so that we have
 \begin{align}
 \label{psigh}
\psi_{g(a),h(a)}(r,\phi,t) &=i\frac{M}{2\pi \hbar t}
e^{\frac{iM}{2 \hbar t}r^2}\sum_{m=0}^{\infty} \frac{\left(i e^{i\pi/4}\right)^m}{m!}
\sum_{\ell=0}^m\binom{m} {\ell}c_{m,\ell}(r,\phi,t) (g(a))^{m-\ell}(h(a))^\ell .
 \end{align}
 Recalling the series expansions in (\ref{funzGH}) for the functions $g(a)$ and $h(a)$ and substituting this into \eqref{psigh} gives
 \begin{align*}
\psi_{g(a), h(a)}(r,\phi,t) &=i\frac{M}{2\pi \hbar t}
e^{\frac{iM}{2 \hbar t}r^2}\sum_{m=0}^{\infty} \frac{\left(i e^{i\pi/4}\right)^m}{m!}
\sum_{\ell=0}^m\binom{m}{\ell} c_{m,\ell}(r,\phi,t)\left(\sum_{u=0}^\infty g_u a^u\right)^{m-\ell}\left(\sum_{v=0}^\infty h_v a^v\right)^\ell .
 \end{align*}

Observe that using the auxiliary complex variable $w$ we have
\begin{equation*}
\lambda^\ell=\frac{1}{i^\ell}\mathcal{D}_w^\ell e^{iw \lambda}\Big|_{w=0}\ \ \ {\rm for }\ \ \ \lambda\in \mathbb{C}, \ \ \ \ell\in \mathbb{N},
\end{equation*}
where $\mathcal{D}_w$ is the derivative with respect to $w$ and $|_{w=0}$ denotes evaluation at $w=0$.   This coupled with the computations above gives \eqref{e:RepInfDO}, and proves the statement of the theorem.
\end{proof}

\begin{theorem}\label{CONTI}
Let $A_1$ be the space of entire functions as in Definition \ref{sect2-def1}.
Then the infinite order differential operator for the Aharonov-Bohm field $\mathcal{U}(r,\phi,t;\mathcal{D}_w)$, defined in \eqref{operH},
acts continuously from $A_1$ to $A_1$.
\end{theorem}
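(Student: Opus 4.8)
\medskip
\noindent\textbf{Proof plan.}
The plan is to verify the continuity criterion \eqref{contA1} directly. I would fix $B>0$ and an arbitrary $f\in A_{1,B}$, and show that $\mathcal{U}(r,\phi,t;\mathcal{D}_w)f\in A_{1,eB}$ with $\|\mathcal{U}(r,\phi,t;\mathcal{D}_w)f\|_{eB}\le C\|f\|_B$ for a constant $C$ depending only on $B,r,\phi,t,M,\hbar,\xi$ and on $g,h$. Since $A_{1,eB}$ is a Banach space, it suffices to bound the $A_{1,eB}$-norm of each term of the series \eqref{OPERU} and to check that the resulting numerical series converges. This rests on three estimates: (a) a bound on the scalars $c_{m,\ell}(r,\phi,t)$; (b) a bound on the action of the operator products $\mathcal{G}_{m,\ell}(\mathcal{D}_w)\mathcal{H}_{\ell}(\mathcal{D}_w)$ on $A_{1,B}$; (c) the summation over $m$, in which the factor $1/m!$ must dominate the growth produced by (a) and (b).

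For (a) I would start from \eqref{coeffcml}, bound $|\cos^{m-\ell}\theta\sin^{\ell}\theta|\le 1$, and insert the estimate $|F_{\xi}(r,\phi,\theta,t,ue^{i\pi/4})|\le q(u)e^{cu}$ of Lemma \ref{EFFEXI}, with $q$ of polynomial degree $d$. This reduces matters to Gaussian moments $\int_0^\infty e^{-\frac{M}{2\hbar t}u^2}e^{cu}u^{m+1+d}\,du$; completing the square turns $e^{cu}$ into a constant times a narrower Gaussian, so that the integral equals a constant times $(2\hbar t/M)^{(m+d)/2}\,\Gamma(\tfrac m2+1+\tfrac d2)$. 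Shifting the argument of $\Gamma$ by the fixed amount $\tfrac d2$ costs only a polynomial in $m$, and Lemma \ref{ESTM_GAMMA} with $q=2$ gives $\Gamma(\tfrac m2+1)\le\sqrt{m!}$. This produces a bound $|c_{m,\ell}(r,\phi,t)|\le K_1K_2^{\,m}\sqrt{m!}$ with $K_1,K_2>0$ depending only on $r,M,\hbar,t,\xi$.

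For (b), since the operators in \eqref{operG}--\eqref{operH} are all functions of the single operator $\mathcal{D}_w$, multiplying out the power series gives $\mathcal{G}_{m,\ell}(\mathcal{D}_w)\mathcal{H}_{\ell}(\mathcal{D}_w)=R_{m,\ell}(\mathcal{D}_w/i)$ with the entire function $R_{m,\ell}:=g^{\,m-\ell}h^{\,\ell}$. As $g,h\in A_1$, Lemma \ref{BETOA1} gives $g\in A_{1,b_g}$, $h\in A_{1,b_h}$, whence $R_{m,\ell}\in A_{1,m\beta}$ with $\|R_{m,\ell}\|_{m\beta}\le C_0^{\,m}$, where $\beta:=\max(b_g,b_h)$ and $C_0:=\max(\|g\|_{b_g},\|h\|_{b_h},1)$; by the Cauchy estimates its Taylor coefficients satisfy $|(R_{m,\ell})_s|\le C_0^{\,m}(em\beta)^s/s^s$. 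For $f\in A_{1,B}$ the Cauchy estimates likewise yield $|f_j|\le\|f\|_B(eB)^j/j!$. Writing $R_{m,\ell}(\mathcal{D}_w/i)f$ out coefficient by coefficient, using the telescoping identity $\tfrac{(k+s)!}{k!}\cdot\tfrac1{(k+s)!}=\tfrac1{k!}$ and the elementary inequality $\sum_{s\ge0}x^s/s^s\le e^x$, I would obtain $|(R_{m,\ell}(\mathcal{D}_w/i)f)_k|\le C_0^{\,m}e^{e^{2}m\beta B}\|f\|_B\,(eB)^k/k!$, i.e.\ $\|\mathcal{G}_{m,\ell}(\mathcal{D}_w)\mathcal{H}_{\ell}(\mathcal{D}_w)f\|_{eB}\le C_0^{\,m}e^{e^{2}m\beta B}\|f\|_B$. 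The decisive point is that the target type $eB$ is \emph{uniform} in $m$ and $\ell$: this is exactly what the Gaussian-type cancellation above buys, and it lets every term of \eqref{OPERU} be summed inside one fixed Banach space.

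For (c) I would combine the pieces. Using $|e^{\frac{iM}{2\hbar t}r^2}|=1$, $|(ie^{i\pi/4})^m|=1$ and $\sum_{\ell=0}^m\binom m\ell=2^m$, steps (a) and (b) give
\[
\|\mathcal{U}(r,\phi,t;\mathcal{D}_w)f\|_{eB}\le\frac{M}{2\pi\hbar t}\,K_1\,\|f\|_B\sum_{m=0}^{\infty}\frac{\big(2K_2C_0\,e^{e^{2}\beta B}\big)^{m}}{\sqrt{m!}},
\]
and this last series, of the form $\sum_m A^m/\sqrt{m!}$, converges (for instance by the ratio test, since $A/\sqrt{m+1}\to0$). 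This is precisely where the prefactor $1/m!$ defeats the $\sqrt{m!}$ growth of the $c_{m,\ell}$ together with the geometric factors $2^m,K_2^m,C_0^m,e^{e^{2}m\beta B}$. Writing $S(B)$ for this finite sum, one concludes $\mathcal{U}(r,\phi,t;\mathcal{D}_w)f\in A_{1,eB}$ and $\|\mathcal{U}(r,\phi,t;\mathcal{D}_w)f\|_{eB}\le\tfrac{MK_1S(B)}{2\pi\hbar t}\|f\|_B$, which is \eqref{contA1} with $B'=eB$. I expect the main difficulty to lie in the combined bookkeeping of (a) and (b): getting the $\sqrt{m!}$ (not $m!$) bound on $c_{m,\ell}$ through Lemma \ref{ESTM_GAMMA}, and verifying that the infinite-order product $\mathcal{G}_{m,\ell}(\mathcal{D}_w)\mathcal{H}_{\ell}(\mathcal{D}_w)$ enlarges the exponential type only linearly in $m$, so that a single space $A_{1,eB}$ accommodates the whole series.
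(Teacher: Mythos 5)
Your proposal is correct and follows the same three-part architecture as the paper's proof: a bound $|c_{m,\ell}|\lesssim K_1K_2^{m}\sqrt{m!}$ obtained from Gaussian moments together with Lemma \ref{ESTM_GAMMA}, a bound on $\mathcal{G}_{m,\ell}(\mathcal{D}_w)\mathcal{H}_{\ell}(\mathcal{D}_w)f$ that is geometric in $m$ with a target exponential type independent of $m$ and $\ell$, and a final summation in which the prefactor $1/m!$ absorbs $\sqrt{m!}$ times all the geometric factors. The one place where you genuinely diverge is the operator estimate: the paper expands the product $\mathcal{G}_{m,\ell}(\mathcal{D}_w)\mathcal{H}_{\ell}(\mathcal{D}_w)$ as an iterated multi-index sum, applies Lemma \ref{BETOA1} to the coefficients of $f$, and exploits the exact cancellation $\frac{(N+k)!}{\Gamma(N+k+1)\,k!}=\frac{1}{k!}$ to conclude that the output stays in the same type, with norm controlled by $C_f\bigl(\sum_u|g_u|b^u\bigr)^{m-\ell}\bigl(\sum_v|h_v|b^v\bigr)^{\ell}$, so that the binomial sum over $\ell$ closes into $\bigl(\sum_u|g_u|b^u+\sum_v|h_v|b^v\bigr)^m$. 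You instead identify the product with the single symbol $R_{m,\ell}=g^{m-\ell}h^{\ell}$ and run Cauchy estimates, landing in $A_{1,eB}$ with norm $C_0^{m}e^{e^{2}m\beta B}\|f\|_B$ and bounding the $\ell$-sum crudely by $2^m$. Both yield the required geometric-in-$m$ control into one fixed Banach space $A_{1,B'}$, which is the decisive point; your version is more economical in bookkeeping and tracks $\|f\|_B$ more explicitly, while the paper's keeps a sharper type. One cosmetic remark on step (a): completing the square produces a \emph{shifted} Gaussian rather than a narrower one, which forces an extra binomial expansion of the moment; the cleaner route (the one the paper takes) is to absorb $q(u)e^{cu}$ into half of the Gaussian weight and reduce directly to $\int_0^\infty e^{-\frac{M}{4\hbar t}u^2}u^{m+1}\,du$. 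This does not affect the validity of your argument.
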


\begin{proof}
We split the proof in steps.

\medskip
\noindent
{\em Step 1: Estimates for the operators $\mathcal{G}_{m,\ell}(\mathcal{D}_w)$ and $ \mathcal{H}_{\ell}(\mathcal{D}_w)$.}
For $m\geq\ell$ for $m,\ell\in \mathbb{N}$, using the property
$$
\mathcal{D}_w^{v}\sum_{j=0}^\infty f_jw^j
=\sum_{j=v}^\infty f_j \frac{j!}{(j-v)!}w^{j-v}
=\sum_{k=0}^\infty f_{v+k} \frac{(v+k)!}{k!}w^{k}
$$
and iterating, $\ell$ times, using the relation
$$
\left(\sum_{v_1=0}^\infty \frac{h_{v_1}}{i^{v_1}}\mathcal{D}_w^{v_1}\right) \sum_{j=0}^\infty f_jw^j=
\sum_{v_1=0}^\infty \frac{h_{v_1}}{i^{v_1}} \sum_{k=0}^\infty f_{v_1+k}\frac{(v_1+k)!}{k!} w^k
$$
we have

\begin{align*}
     \mathcal{H}_{\ell}(\mathcal{D}_w)f(w)&=
\left(\sum_{v=0}^\infty \frac{h_v}{i^v}\mathcal{D}_w^v\right)^\ell f(w)
\\
&
=
\sum_{v_1=0}^\infty \frac{h_{v_1}}{i^{v_1}}\ldots \sum_{v_\ell=0}^\infty \frac{h_{v_\ell}}{i^{v_\ell}}
 \sum_{k=0}^\infty f_{\sum_{j=1}^{\ell} v_j+k}\frac{(\sum_{j=1}^\ell v_j+k)!}{k!} w^k
 \\
 &
 =
\prod_{j=1}^\ell\sum_{v_j=0}^\infty \frac{h_{v_j}}{i^{v_j}}
 \sum_{k=0}^\infty f_{\sum_{j=1}^\ell v_j+k}\frac{(\sum_{j=1}^\ell v_j+k)!}{k!} w^k.
\end{align*}

Similarly considering the operators
$\mathcal{G}_{m,\ell}(\mathcal{D}_w)$, and performing analogous computations we have
\[
\begin{split}
&\mathcal{G}_{m,\ell}(\mathcal{D}_w) \mathcal{H}_{\ell}(\mathcal{D}_w)f(w)=
\left(\sum_{u=0}^\infty \frac{g_u}{i^u}\mathcal{D}_w^u\right)^{m-\ell}\left(\sum_{v=0}^\infty \frac{h_v}{i^v}\mathcal{D}_w^v\right)^\ell f(w)
\\
&
=\prod_{p=1}^{m-\ell}\sum_{u_p=0}^\infty  \frac{g_{u_p}}{i^{u_p}}   \prod_{j=1}^\ell\sum_{v_j=0}^\infty \frac{h_{v_j}}{i^{v_j}}
 \sum_{k=0}^\infty f_{\sum_{p=1}^{m-\ell} u_p+\sum_{j=1}^\ell v_j+k}\frac{\left(\sum_{p=1}^{m-\ell} u_p+\sum_{j=1}^\ell v_j+k\right)!}{k!} w^k.
\end{split}
\]

We now estimate $\mathcal{G}_{m,\ell}(\mathcal{D}_w) \mathcal{H}_{\ell}(\mathcal{D}_w)f(w)$ taking the modulus
\[
\begin{split}
&\Big|\mathcal{G}_{m,\ell}(\mathcal{D}_w) \mathcal{H}_{\ell}(\mathcal{D}_w)f(w)\Big|
\\
&
\leq \prod_{p=1}^{m-\ell}\sum_{u_p=0}^\infty |g_{u_p}|   \prod_{j=1}^\ell\sum_{v_j=0}^\infty |h_{v_j}|
 \sum_{k=0}^\infty 
 \big|f_{\sum_{p=1}^{m-\ell} u_p+\sum_{j=1}^\ell v_j+k}\big|\frac{\left(\sum_{p=1}^{m-\ell} u_p+\sum_{j=1}^\ell v_j+k\right)!}{k!}\ | w|^k
\end{split}
\]
and using the estimate in Lemma \ref{BETOA1} for the coefficients 
$f_{\sum_{p=1}^{m-\ell} u_p+\sum_{j=1}^\ell v_j+k}$ we have for some $C_f>0$ and $b>0$ that 
$$
\left|f_{\sum_{p=1}^{m-\ell} u_p+\sum_{j=1}^\ell v_j+k}\right|
\leq C_f\frac{b^{\sum_{p=1}^{m-\ell} u_p+\sum_{j=1}^\ell v_j+k}}{\Gamma\left(\sum_{p=1}^{m-\ell} u_p+\sum_{j=1}^\ell v_j+k+1\right)}.
$$
By elementary properties of the Gamma function we have
\[
\begin{split}
&\Big|\mathcal{G}_{m,\ell}(\mathcal{D}_w) \mathcal{H}_{\ell}(\mathcal{D}_w)f(w)\Big|
\\
&
\leq C_f\prod_{p=1}^{m-\ell}\sum_{u_p=0}^\infty  |g_{u_p}|
  \prod_{j=1}^\ell\sum_{v_j=0}^\infty |h_{v_j}|
 \sum_{k=0}^\infty \frac{\,b^{\sum_{p=1}^{m-\ell} u_p+\sum_{j=1}^\ell v_j+k} \,\left(\sum_{p=1}^{m-\ell} u_p+\sum_{j=1}^\ell v_j+k\right)!}{\Gamma\left(\sum_{p=1}^{m-\ell} u_p+\sum_{j=1}^\ell v_j+k+1\right)\,k!}
 |w|^k.
\end{split}
\]
Simplifying we have
\begin{align}
&\Big|\mathcal{G}_{m,\ell}(\mathcal{D}_w) \mathcal{H}_{\ell}(\mathcal{D}_w)f(w)\Big|
\notag\\
&
\leq C_f\prod_{p=1}^{m-\ell}\sum_{u_p=0}^\infty  |g_{u_p}|b^{\sum_{p=1}^{m-\ell} u_p}
  \prod_{j=1}^\ell\sum_{v_j=0}^\infty |h_{v_j}|b^{\sum_{j=1}^\ell v_j}\
 \sum_{k=0}^\infty \frac{b^k}{k!} \ |w|^k
 \notag\\
&
= C_f\left(\sum_{u=0}^\infty  |g_{u}|b^u\right)^{m-\ell} 
  \left(\sum_{v=0}^\infty |h_{v}|b^v\right)^\ell\
e^{b |w|}
\label{EstimateIODO}
 \end{align}
and we observe that since the function $g$ and $h$ are entire 
$$
\sum_{u=0}^\infty  |g_{u}|b^u <\infty \ \ {\rm and} \ \ \sum_{v=0}^\infty  |h_{v}|b^v<\infty
$$
for all $b\in \mathbb{R}$.

\medskip
\noindent
{\em Step 2: Estimates for the coefficients $c_{m,\ell}(r,\phi,t)$.}
We consider the modulus of the coefficients in \eqref{coeffcml}
\[
\begin{split}
|c_{m,\ell}(r,\phi,t)|&=
\left|\int_0^{2\pi}\int_0^\infty
e^{\frac{-M}{2 \hbar t}u^2} F_{\xi}(r,\phi,\theta,t,ue^{i\frac{\pi}{4}})
 \cos^{m-\ell}\theta\sin^{\ell}\theta u^{m+1} du
 \ d\theta\right|
\\
&
\leq
\int_0^{2\pi}\int_0^\infty
e^{\frac{-M}{2 \hbar  t}u^2} \big|F_{\xi}(r,\phi,\theta,t,ue^{i\frac{\pi}{4}})
 \cos^{m-\ell}\theta\sin^{\ell}\theta u^{m+1}\big| du
 \\
 &\leq
\int_0^{2\pi}\int_0^\infty
e^{\frac{-M}{2 \hbar t}u^2} \big|F_{\xi}(r,\phi,\theta,t,ue^{i\frac{\pi}{4}})\big|
  u^{m+1} du
 d\theta.
 \end{split}
 \]
Now, thanks to Lemma \ref{EFFEXI}
for $u$ real and non-negative there exists a positive constant $c$ such that
\begin{equation}
    \left\vert F_{\xi}(r,\phi,\theta,t,ue^{i\frac{\pi}{4}})\right\vert \lesssim q(u)e^{{c}u}
\end{equation}
where $q(u)$ is a function that has at most polynomial growth in $u$ and the coefficients in the function and in
${c}$ depend upon $r,M,t,\theta,\phi$ and $\xi$.
More explicitly, 
\begin{align*}
\left\vert F_{\xi}(r,\phi,\theta,t,\rho)\right\vert
  \leq e^{\frac{ M r\, \rho }{ 4\hbar t}}I_0\left(\frac{Mr\rho}{2\hbar t}\right)\left\{\left(\frac{ M r\, \rho }{ 4\hbar t}\right)^{1-\xi_f}\left(3+\frac{Mr\rho}{4\hbar t}\right)+\left(\frac{ M r\, \rho }{ 4\hbar t}\right)^{\xi_f}\left(3+\frac{Mr\rho}{2\hbar t}\right)\right\},
\end{align*}
where $\rho=|ue^{i\frac{\pi}{4}}|=u$. So, for $t>0$, we have
\[
\begin{split}
|c_{m,\ell}(r,\phi,t)|&=
\int_0^{2\pi}\int_0^\infty
e^{\frac{-M}{2 \hbar t}u^2} \big|F_{\xi}(r,\phi,\theta,t,ue^{i\frac{\pi}{4}})\big|
  u^{m+1} du
 \ d\theta
\\
&
\leq\int_0^{2\pi}\int_0^\infty
e^{\frac{-M}{2 \hbar t}u^2} q(u)e^{{c}u}  u^{m+1} du \ d\theta
\\
&
\leq 2\pi\int_0^\infty e^{\frac{-M}{4 \hbar t}u^2} \Big(e^{\frac{-M}{4 \hbar t}u^2} q(u)e^{{c}u}\Big)u^{m+1} du
\\
&
\leq K \int_0^\infty e^{\frac{-M}{4 \hbar  t}u^2}  u^{m+1} du
 \end{split}
 \]
 where we set
 $
 \displaystyle K:=2\pi \sup_{u\in [0,\infty]}\Big(e^{\frac{-M}{4 \hbar t}u^2} q(u)e^{{c}u}\Big)<\infty
$.
Using the following well known integral
\begin{equation*}
\int_{0}^{\infty} x^n e^{-\gamma x^2}dx=\frac{\Gamma\Big(\frac{n+1}{2}\Big)}{2\gamma^{(n+1)/2}},\ \ \ {\rm for}\ \  \ \gamma>0.
\end{equation*}
we conclude that
\begin{equation}\label{STIMACML}
|c_{m,\ell}(r,\phi,t)|\leq K  \frac{\Gamma\Big(\frac{m+2}{2}\Big)}{2\Big(\frac{M}{4 \hbar t}\Big)^{(m+2)/2}}.
\end{equation}
Observe that the estimate does not depend on $\ell$.

\medskip
\noindent
{\em Step 3: The continuity estimate.}
Based on the previous points, for $t>0$, we now consider
\[
\begin{split}
&\left|\mathcal{U}(r,\phi,t;\mathcal{D}_w)f(w)\right|=
\left|i\frac{M}{2\pi \hbar t}
e^{\frac{iM}{2 \hbar t}r^2}\sum_{m=0}^{\infty} \frac{\left(i e^{i\pi/4}\right)^m}{m!}
\sum_{\ell=0}^m\binom{m}{\ell} c_{m,\ell}(r,\phi,t)
\mathcal{G}_{m,\ell}(\mathcal{D}_w) \mathcal{H}_{\ell}(\mathcal{D}_w)f(w)\right|
\\
&
\leq \frac{M}{2\pi \hbar t}
\sum_{m=0}^{\infty} \frac{1}{m!}
\sum_{\ell=0}^m\binom{m}{\ell} \big| c_{m,\ell}(r,\phi,t)\big|\
\big| \mathcal{G}_{m,\ell}(\mathcal{D}_w) \mathcal{H}_{\ell}(\mathcal{D}_w)f(w)\big|
\\
&
\leq \frac{M}{2\pi \hbar t}
\sum_{m=0}^{\infty} \frac{1}{m!}
\sum_{\ell=0}^m\binom{m}{\ell}  K  \frac{\Gamma\Big(\frac{m+2}{2}\Big)}{2\Big(\frac{M}{4 \hbar t}\Big)^{(m+2)/2}}
C_f\left(\sum_{u=0}^\infty  |g_{u}|b^u\right)^{m-\ell}
  \left(\sum_{v=0}^\infty |h_{v}|b^v\right)^\ell\
e^{b |w|}
\\
&
= \frac{M}{2\pi \hbar t}KC_f
\sum_{m=0}^{\infty} \frac{1}{m!} \frac{\Gamma\Big(\frac{m+2}{2}\Big)}{2\Big(\frac{M}{4 \hbar t}\Big)^{(m+2)/2}}
\sum_{\ell=0}^m\binom{m}{\ell}
\left(\sum_{u=0}^\infty  |g_{u}|b^u\right)^{m-\ell}
  \left(\sum_{v=0}^\infty |h_{v}|b^v\right)^\ell\
e^{b |w|}
\\
&
= \frac{M}{2\pi \hbar t}KC_f
\sum_{m=0}^{\infty} \frac{1}{m!} \frac{\Gamma\left(\frac{m+2}{2}\right)}{2\left(\frac{M}{4 \hbar t}\right)^{(m+2)/2}}
\left(\sum_{u=0}^\infty  |g_{u}|b^u+\sum_{v=0}^\infty |h_{v}|b^v\right)^m\ e^{b |w|},
\end{split}
\]
where we have utilized the estimates \eqref{STIMACML} and \eqref{EstimateIODO} in the course of the estimation above.
Now setting
$$
\Lambda:=\sum_{m=0}^{\infty} \frac{1}{m!} \frac{\Gamma\Big(\frac{m+2}{2}\Big)}{2\Big(\frac{M}{4 \hbar t}\Big)^{(m+2)/2}}
\left(\sum_{u=0}^\infty  |g_{u}|b^u+\sum_{v=0}^\infty |h_{v}|b^v\right)^m
$$
using the estimate in Lemma \ref{ESTM_GAMMA}, for $q=2$:
$$
\Gamma\left(\frac{m}{2}+1\right)\leq (m!)^{1/2}
$$
we have 
\[
\begin{split}
\Lambda&\leq 
\sum_{m=0}^{\infty} \frac{1}{m!} \frac{(m!)^{1/2}}{2\Big(\frac{M}{4 t}\Big)^{(m+2)/2}}
\left(\sum_{u=0}^\infty  |g_{u}|b^u+\sum_{v=0}^\infty |h_{v}|b^v\right)^m
\\
&
=
\frac{2t}{M}\sum_{m=0}^{\infty} \frac{1}{(m!)^{1/2}} 
\left( \left(\frac{4t}{M}\right)^{1/2} \left(\sum_{u=0}^\infty  |g_{u}|b^u+\sum_{v=0}^\infty |h_{v}|b^v\right)\right)^m<\infty.
\end{split}
\]
All together this gives the estimate
$$
\big|\mathcal{U}(r,\phi,t;\mathcal{D}_w)f(w)\big|\leq \frac{M}{2\pi \hbar t} KC_f\Lambda\ e^{b |w|}
$$
which gives us the statement of the theorem.  Note that the constants involved depend upon $M,t,r,\phi$ and $\xi$.
\end{proof}

Due to the continuity theorem proved above now we can show that solutions of the Schr\"odinger equation with superoscillatory datum under the Aharonov-Bohm field possess the supershift property.

 \begin{theorem}\label{MAINRES}
  The solution of the Schr\"odinger equation,
 with the propagator \eqref{PROPAB} with superoscillatory initial datum
 given by \eqref{YN} can be written as
$$
\Psi_n(r,\phi,t)=\sum_{j=0}^nC_j(n,a) \psi_{g\left(1-\frac{2j}{n}\right), h\left(1-\frac{2j}{n}\right)}(r,\phi,t)
$$
for all $r\in (0,\infty)$ $\phi \in [0,2\pi]$, $t >0$,
where $\psi_{g(a),h(a)}(r,\phi,t)$ is given by \eqref{e:RepInfDO}  and $C_j(n,a)$ are given by \eqref{CICONJN}.
Moreover,
$\psi_{a,b}(r,\phi,t)$
has the supershift property, that is
$$
\lim_{n\to\infty}\Psi_n(r,\phi,t)=\lim_{n\to\infty}\sum_{j=0}^nC_j(n,a)\psi_{g\left(1-\frac{2j}{n}\right), h\left(1-\frac{2j}{n}\right)}(r,\phi,t)=\psi_{g(a), h(a)}(r,\phi,t).
 $$
\end{theorem}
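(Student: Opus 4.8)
The plan is to deduce the supershift directly from the operator representation established in Theorem~\ref{INFTYORDESSOLUT} together with the $A_1$-continuity of the Aharonov--Bohm operator $\mathcal{U}(r,\phi,t;\mathcal{D}_w)$ proved in Theorem~\ref{CONTI}; once these are available the statement follows from a short continuity argument. First, since the solution map $\psi_0\mapsto\int_{\mathbb{R}^2}K(x'',x',t)\psi_0(x')\,dx'$ is linear and $Y_n(x,y,a)=\sum_{j=0}^nC_j(n,a)e^{ig(1-2j/n)x}e^{ih(1-2j/n)y}$ by \eqref{YN}, the corresponding solution of the Schr\"odinger equation is exactly $\Psi_n(r,\phi,t)=\sum_{j=0}^nC_j(n,a)\psi_{g(1-2j/n),h(1-2j/n)}(r,\phi,t)$, which is the first assertion of the theorem. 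Next, applying Theorem~\ref{INFTYORDESSOLUT} with the real parameter $\lambda=1-2j/n$ gives $\psi_{g(\lambda),h(\lambda)}(r,\phi,t)=\mathcal{U}(r,\phi,t;\mathcal{D}_w)\,e^{i\lambda w}\big|_{w=0}$, and since $\mathcal{U}(r,\phi,t;\mathcal{D}_w)$ is linear and the sum over $j$ is finite,
\begin{align*}
\Psi_n(r,\phi,t)
&=\mathcal{U}(r,\phi,t;\mathcal{D}_w)\Big[\textstyle\sum_{j=0}^nC_j(n,a)e^{i(1-2j/n)w}\Big]\Big|_{w=0}\\
&=\mathcal{U}(r,\phi,t;\mathcal{D}_w)\,F_n(w,a)\big|_{w=0},
\end{align*}
where $F_n(\cdot,a)$ denotes the entire extension of the archetypal superoscillating function \eqref{FNEXP}; in the same way $\psi_{g(a),h(a)}(r,\phi,t)=\mathcal{U}(r,\phi,t;\mathcal{D}_w)\,e^{iaw}\big|_{w=0}$.

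Now I would use the fact recalled in Remark~\ref{RKsuporfunc} that $F_n(\cdot,a)$ converges, as a function of $w$, to $e^{ia\,\cdot}$ in the topology of $A_1$: there is $B>0$ with $F_n(\cdot,a)$ and $e^{ia\,\cdot}$ all lying in $A_{1,B}$ and $\|F_n(\cdot,a)-e^{ia\,\cdot}\|_B\to0$ as $n\to\infty$. By Theorem~\ref{CONTI} the operator $\mathcal{U}(r,\phi,t;\mathcal{D}_w)$ maps $A_1$ continuously into $A_1$, so there are $B'>0$ and $C>0$, depending on $B$ and on $r,\phi,t,\xi$, such that $\big\|\mathcal{U}(r,\phi,t;\mathcal{D}_w)\big(F_n(\cdot,a)-e^{ia\,\cdot}\big)\big\|_{B'}\leq C\,\big\|F_n(\cdot,a)-e^{ia\,\cdot}\big\|_B$, and the right-hand side tends to $0$ as $n\to\infty$. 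Finally, because $|f(0)|\leq\|f\|_{B'}$ for every $f\in A_{1,B'}$, point evaluation at $w=0$ is continuous on $A_1$; applying it to $\mathcal{U}(r,\phi,t;\mathcal{D}_w)\big(F_n(\cdot,a)-e^{ia\,\cdot}\big)$ and invoking the identities of the previous paragraph gives
$$
\big|\Psi_n(r,\phi,t)-\psi_{g(a),h(a)}(r,\phi,t)\big|\leq\big\|\mathcal{U}(r,\phi,t;\mathcal{D}_w)\big(F_n(\cdot,a)-e^{ia\,\cdot}\big)\big\|_{B'}\longrightarrow0,
$$
which is exactly the claimed supershift. If in addition one checks that the constant $C$ in \eqref{contA1} for $\mathcal{U}$, built from the quantities appearing in the proof of Theorem~\ref{CONTI}, stays bounded on compact subsets of $(0,\infty)\times[0,2\pi]\times(0,\infty)$, the same estimate yields convergence locally uniformly in $(r,\phi,t)$.

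The real content of the proof has already been carried out in Theorems~\ref{INFTYORDESSOLUT} and \ref{CONTI}, so I expect no genuine obstacle to remain; the two load-bearing inputs are the $A_1$-continuity of $\mathcal{U}$ and the $A_1$-convergence $F_n(\cdot,a)\to e^{ia\,\cdot}$, the latter being classical in the superoscillations literature and quoted in Remark~\ref{RKsuporfunc}. The only points requiring a moment's care are formal: that the infinite order differential operator $\mathcal{U}(r,\phi,t;\mathcal{D}_w)$ commutes with the \emph{finite} sum $\sum_{j=0}^n$ defining $Y_n$ and $\Psi_n$ (clear, being a finite sum), and that evaluation at $w=0$ is continuous for the $A_1$-topology, which is precisely the elementary bound $|f(0)|\leq\|f\|_{B'}$ used above.
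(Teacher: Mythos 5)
Your proposal is correct and follows essentially the same route as the paper: linearity to identify $\Psi_n$, the operator representation from Theorem~\ref{INFTYORDESSOLUT}, and the $A_1$-continuity of $\mathcal{U}(r,\phi,t;\mathcal{D}_w)$ from Theorem~\ref{CONTI} combined with the $A_1$-convergence of $\sum_{j=0}^nC_j(n,a)e^{i(1-2j/n)w}$ to $e^{iaw}$. Your write-up merely makes explicit (via the bound $|f(0)|\leq\|f\|_{B'}$ and the norm estimate for $\mathcal{U}$) the interchange of limit and operator that the paper states as immediate.
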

\begin{proof}
Since the solution 
$\psi_{g(a),h(a)}(r,\phi,t)$ can be 
represented via the Aharonov-Bohm infinite order differential operator, 
from Theorem \ref{INFTYORDESSOLUT},  
$\psi_{g\left(1-\frac{2j}{n}\right), h\left(1-\frac{2j}{n}\right)}(r,\phi,t)$ is given in terms
of the infinite order differential operators 
$\mathcal{G}_{m,\ell}(\mathcal{D}_w)$ and $\mathcal{H}_{\ell}(\mathcal{D}_w)$  as
\begin{equation*}
\begin{split}
&\psi_{g\left(1-\frac{2j}{n}\right), h\left(1-\frac{2j}{n}\right)}(r,\phi,t)
=\mathcal{U}(r,\phi,t;\mathcal{D}_\xi)e^{i\left(1-\frac{2j}{n}\right)w}\Big|_{w=0}
\\
&
=
i\frac{M}{2\pi \hbar t}e^{\frac{iM}{2 \hbar t}r^2}\sum_{m=0}^{\infty} \frac{\left(i e^{i\pi/4}\right)^m}{m!}
\sum_{\ell=0}^m\binom{m}{\ell} c_{m,\ell}(r,\phi,t)
\mathcal{G}_{m,\ell}(\mathcal{D}_w) \mathcal{H}_{\ell}(\mathcal{D}_w)
e^{i\left(1-\frac{2j}{n}\right)w}\Big|_{w=0}.
\end{split}
\end{equation*}
By linearity, the evolution of the superoscillatory initial datum is given by
$$
\Psi_n(r,\phi,t) =\sum_{j=0}^nC_j(n,a)\psi_{g\left(1-\frac{2j}{n}\right), h\left(1-\frac{2j}{n}\right)}(r,\phi,t).
$$
By Theorem \ref{CONTI}, it is immediate that
\[
\begin{split}
&\lim_{n\to\infty}\Psi_n(r,\phi,t)=\lim_{n\to\infty} \sum_{j=0}^nC_j(n,a)\psi_{g\left(1-\frac{2j}{n}\right), h\left(1-\frac{2j}{n}\right)}(r,\phi,t)
\\
&=
i\frac{M}{2\pi \hbar t}e^{\frac{iM}{2 \hbar t}r^2}\sum_{m=0}^{\infty} \frac{\left(i e^{i\pi/4}\right)^m}{m!}
\sum_{\ell=0}^m\binom{m}{\ell} c_{m,\ell}(r,\phi,t)
\mathcal{G}_{m,\ell}(\mathcal{D}_w) \mathcal{H}_{\ell}(\mathcal{D}_w)
 \lim_{n\to\infty} \sum_{j=0}^nC_j(n,a)e^{i\left(1-\frac{2j}{n}\right)w}
\Big|_{w=0}
\\
&
=
i\frac{M}{2\pi \hbar t}e^{\frac{iM}{2 \hbar t}r^2}\sum_{m=0}^{\infty} \frac{\left(i e^{i\pi/4}\right)^m}{m!}
\sum_{\ell=0}^m\binom{m}{\ell} c_{m,\ell}(r,\phi,t)
\mathcal{G}_{m,\ell}(\mathcal{D}_w) \mathcal{H}_{\ell}(\mathcal{D}_w)
 e^{iaw}
\Big|_{w=0}
\\
&
=\psi_{g(a), h(a)}(r,\phi,t)
\end{split}
\]
since the sequence $\displaystyle\sum_{j=0}^nC_j(n,a)e^{i\left(1-\frac{2j}{n}\right)w}$ converges to $e^{ia w}$ in $A_1$, concluding the proof of the Theorem.
 \end{proof}


\begin{thebibliography}{99}


\bibitem{aav} Y. Aharonov, D. Albert, L. Vaidman, {\em How the result of a measurement of a component of the spin of a spin-1/2 particle can turn out to be 100}, Phys. Rev. Lett., {\bf 60} (1988), 1351-1354.


\bibitem{ABCS1}
Y. Aharonov, J. Behrndt, F. Colombo, P. Schlosser,
{\em Schr\"odinger evolution of superoscillations with $\delta$- and $\delta'$-potentials},
 Quantum Stud. Math. Found., {\bf 7} (2020), 293--305.

\bibitem{JDE}
Y. Aharonov, J. Behrndt, F. Colombo, P. Schlosser,
{\em Green's Function for the Schr\"odinger Equation with
 a Generalized Point Interaction and Stability of Superoscillations},
  J. Differential Equations, {\bf 277} (2021), 153--190.


\bibitem{ABCS}
 Y. Aharonov, J. Behrndt, F. Colombo, P. Schlosser,
 {\em A unified approach to Schr\"odinger evolution of superoscillations and supershifts},
 J. Evol. Equ. 22 (2022), no. 1, Paper No. 26, 31 pp.

\bibitem{NEWMETH} Y. Aharonov, F. Colombo, I. Sabadini, T. Shushi, D. C. Struppa, J. Tollaksen,
{\em A new method to generate superoscillating functions and supershifts}, Proc. R. Soc. A 477 (2021).

\bibitem{STEP} Y. Aharonov, F. Colombo, I. Sabadini, D. C. Struppa, J. Tollaksen,
{\em How superoscillating tunneling waves can overcome the step potential},
 Ann. Physics 414 (2020) 168088.



\bibitem{tre}
 Y. Aharonov, F. Colombo,  I. Sabadini, D.C. Struppa, J. Tollaksen,
  {\em  Evolution of superoscillations in the Klein--Gordon field},
   Milan J. Math., {\bf 88} (2020), no. 1, 171--189.



\bibitem{acsst3}
Y. Aharonov,  F. Colombo,  I. Sabadini, D.C. Struppa, J. Tollaksen,
{\em On the Cauchy problem for the Schr\"{o}dinger equation with superoscillatory initial data},
J. Math. Pures Appl., {\bf 99} (2013), 165--173.






\bibitem{acsst5}
Y. Aharonov,  F. Colombo,  I. Sabadini, D.C. Struppa, J. Tollaksen,
{\em The mathematics of superoscillations},  Mem. Amer. Math. Soc., {\bf 247} (2017), no. 1174, v+107 pp.


\bibitem{QS1}
 Y. Aharonov, F. Colombo,  D.C. Struppa,  J.  Tollaksen,
{\em
Schr\"odinger evolution of superoscillations under different potentials},
  Quantum Stud. Math. Found., {\bf 5} (2018), 485--504.



\bibitem{abook} Y. Aharonov, D. Rohrlich, {\em Quantum Paradoxes: Quantum Theory for the Perplexed}, Wiley-VCH Verlag, Weinheim, 2005.


\bibitem{QS3}
    Y. Aharonov, I. Sabadini, J. Tollaksen, A. Yger,
 {\em Classes of superoscillating functions},
     Quantum Stud. Math. Found., {\bf 5} (2018), 439--454.

\bibitem{AShushi} Y. Aharonov, T. Shushi, {\em A new class of superoscillatory functions based on a generalized polar coordinate system}, Quantum Stud. Math. Found., {\bf 7} (2020), 307--313.



\bibitem{QS2}
T. Aoki, F. Colombo, I. Sabadini, D. C. Struppa, {\em
Continuity of some operators arising in the theory of superoscillations},
 Quantum Stud. Math. Found., {\bf 5} (2018),  463--476.



\bibitem{AOKI} T. Aoki, F. Colombo,  I. Sabadini, D.C. Struppa,
{\em Continuity theorems for a class of convolution
operators and applications to superoscillations},
Ann. Mat. Pura Appl.,  {\bf 197} (2018), 1533--1545.

\bibitem{uno}
 D. Alpay, F. Colombo, I. Sabadini, D.C. Struppa, {\em Aharonov-Berry superoscillations in the radial harmonic oscillator potential}, Quantum Stud. Math. Found., {\bf 7} (2020),  269--283.

\bibitem{DIKI}
D. Alpay, F. Colombo, K. Diki, I.  Sabadini,  D. C. Struppa,
 {\em Superoscillations and Fock spaces}. J. Math. Phys. 64 (2023), no. 9, Paper No. 093505, 20 pp.

\bibitem{Jussi}
J. Behrndt, F. Colombo, P. Schlosser, {\em  Evolution of Aharonov--Berry superoscillations in Dirac $\delta$-potential},
 Quantum Stud. Math. Found., {\bf 6} (2019), 279--293.

\bibitem{BOREL} J. Behrndt, F. Colombo, P. Schlosser, D.C. Struppa, {\em Integral representation of superoscillations via complex Borel measures and their convergence}, Trans. Amer. Math. Soc. 376 (2023), no. 9, 6315-6340.

\bibitem{BG_book}
 C. A. Berenstein, R. Gay, {\em Complex Analysis and Special Topics in Harmonic Analysis}, Springer-Verlag, New York, 1995.

\bibitem{Be19} M. Berry et al, \textit{Roadmap on superoscillations}, 2019, Journal of Optics 21 053002.

\bibitem{berry2} M. V. Berry, {\em Faster than Fourier}, in
Quantum Coherence and Reality; in celebration of the 60th Birthday of
Yakir Aharonov ed. J. S. Anandan and J. L. Safko, World Scientific,
Singapore, (1994), pp. 55-65.



\bibitem{berry-noise-2013}  M. Berry, {\em Exact nonparaxial transmission of subwavelength detail using superoscillations}, {J. Phys. A} {\bf 46}, (2013), 205203.

\bibitem{BerryMILAN}
 M. V. Berry,
 {\em Representing superoscillations and narrow Gaussians with elementary functions},
  Milan J. Math., {\bf 84} (2016),  217--230.

\bibitem{SPIN} F. Colombo, E. Pozzi, I. Sabadini, B. D. Wick,
{\em Evolution of superoscillations for spinning particles},
Proc. Amer. Math. Soc. Ser. B 10 (2023), 129-143.

\bibitem{Talbot}
F. Colombo, I. Sabadini,  D.C. Struppa, A. Yger,
{\em
Gauss sums, superoscillations and the Talbot carpet},
 J. Math. Pures Appl., (9) {\bf 147} (2021), 163--178.

 \bibitem{REGULAR-SAMP}
F. Colombo, I. Sabadini, D. C. Struppa, A. Yger,
{\em Analyticity and supershift with regular sampling},
Preprint arXiv:2310.11528.


\bibitem{IRREGULAR-SAMP}
F. Colombo, I. Sabadini, D. C. Struppa, A. Yger,
{\em Analyticity and supershift with irregular sampling},
arXiv:2312.05089, to apper Complex Analysis and its Synergies.




\bibitem{hyper}
F. Colombo, I. Sabadini,  D.C. Struppa, A. Yger,
{\em Superoscillating sequences and hyperfunctions},
 Publ. Res. Inst. Math. Sci., {\bf 55} (2019), no. 4, 665--688.



\bibitem{due}
F. Colombo, G. Valente, { \em Evolution of Superoscillations in the Dirac Field},
 Found. Phys., {\bf 50} (2020), 1356--1375.


\bibitem{kempf1}
P. J. S. G. Ferreira, A. Kempf,
{\em Unusual properties of superoscillating particles},
{J. Phys. A}, {\bf 37} (2004), 12067-76.


\bibitem{kempf2}
P. J. S. G. Ferreira, A. Kempf,
{\em Superoscillations: faster than the Nyquist rate},
{IEEE Trans. Signal Processing}, {\bf 54} (2006), 3732--3740.

\bibitem{kempf2HHH}
 P. J. S. G. Ferreira, A. Kempf,  M. J. C. S. Reis,
 {\em Construction of Aharonov-Berry's superoscillations}, J. Phys. A {\bf 40} (2007),  5141-5147.

\bibitem{Feynman}
 Feynman, R. P. Space-time approach to non-relativistic quantum mechanics. Rev. Modern Physics 20 (1948), 367-387.


\bibitem{FeynmanHIBBS}
R. P. Feynman, A. R. Hibbs,
{\em Quantum mechanics and path integrals}.
Emended edition. Emended and with a preface by Daniel F. Styer. Dover Publications, Inc., Mineola, NY, 2010. xii+371 pp.



\bibitem{QS20} A.N. Jordan, {\em Superresolution using supergrowth and intensity contrast imaging},
 Quantum Stud.: Math. Found., 7(3) (2020) 285-292.


\bibitem{QUANOB} A. N. Jordan, Y. Aharonov, D. C. Struppa, F. Colombo, I. Sabadini, T. Shushi, J. Tollaksen, J. C. Howell, A. N. Vamivakas,
    {\em Super-phenomena in arbitrary quantum observables}, arXiv:2209.05650.

\bibitem{PHREVL} J.C. Howell, A.N. Jordan, B. Šoda, A. Kempf,
{\em Super Interferometric Range Resolution}, Phys. Rev. Lett., 131(5) (2023) p.053803.


\bibitem{OPTICS} F.M. Huang, Y. Chen, F.J.G. De Abajo, N.I. Zheludev,
{\em Optical super-resolution through super-oscillations},
J. Optics A: Pure Appl.Optics, 9(9) (2007) p.285.




\bibitem{kempfQS}
A. Kempf,
{\em Four aspects of superoscillations},
 Quantum Stud. Math. Found., {\bf 5} (2018),  477-484.


\bibitem{LaidlawMoretteDeWitt}
M. G. G. Laidlaw, C. Morette-DeWitt,
{\em Feynman Functional Integrals for Systems of Indistinguishable Particles},
Phys. Rev. D 3, 1375 - Published 15 March 1971


\bibitem{lindberg} J. Lindberg, {\em Mathematical concepts of optical superresolution},  Journal of Optics, {\bf 14} (2012),
 083001.

 \bibitem{MAYS}
 J. Mays, G. Gbur, {\em Partially coherent superoscillations in the Talbot effect},
  J. Phys. A 55 (2022), no. 50, Paper No. 504002, 14 pp.


\bibitem{Morandi}
G. Morandi, E. Menossi,
{\em  Path-integrals in multiply-connected spaces and the Aharonov-Bohm effect}.
European J. Phys. 5 (1984), no. 1, 49-58.


\bibitem{Pozzi}
E. Pozzi,  B. D. Wick, {\em Persistence of superoscillations under the Schr\"odinger equation},
 Evol. Equ. Control Theory, 11 (2022), no. 3, 869-894.


\bibitem{peter}
 P. Schlosser,  {\em Time evolution of superoscillations for the Schr\"odinger equation on $\mathbb{R}\setminus \{0\}$},
  Quantum Stud. Math. Found. 9 (2022), no. 3, 343-366.

\bibitem{sodakemp}
B. Soda,  A. Kempf, {\em  Efficient method to create superoscillations with generic target behavior},
Quantum Stud. Math. Found., {\bf 7} (2020), no. 3, 347--353.


\bibitem{SchulmanBOOK}
L. S. Schulman, {\em Techniques and applications of path integration}. A Wiley-Interscience Publication. John Wiley and Sons, Inc., New York, 1981. xv+359 pp.




\bibitem{Watson}
 Watson, G. N., {\em A Treatise on the theory of Bessel functions}. Reprint of the second (1944) edition. Cambridge Mathematical Library. Cambridge University Press, Cambridge, 1995. viii+804 pp


\end{thebibliography}
\end{document}